\documentclass[english, 12pt, letter]{article}
\usepackage{amsmath,amssymb,amsthm}
\usepackage{array}
\usepackage{mathtools}
\usepackage{natbib}
\usepackage[nodisplayskipstretch]{setspace}
\usepackage{bm}
\usepackage{threeparttable} % FOR NOtes
\usepackage{longtable}
\usepackage{graphicx}
\usepackage{enumerate}
\usepackage{commath}			% for \norm{...}
\usepackage{bibentry}
\usepackage{booktabs}
\usepackage[titletoc,title]{appendix}
\usepackage{multirow}
\usepackage{arydshln}
\usepackage{tabularx}
\usepackage{dsfont}
\usepackage{rotating} % Rotating table
\usepackage[font={small}]{subcaption}
\usepackage[left=1in, right=1in, top=1in, bottom=1in]{geometry}

\usepackage[colorinlistoftodos,textsize=small]{todonotes}
\allowdisplaybreaks

\usepackage{url}
\usepackage{xr-hyper}  % for cross-referencing
\usepackage[
	breaklinks,
	colorlinks=true,
	pagebackref=false,
	pdfauthor={},%
	pdftitle={},% 
	citecolor=blue,
	linkcolor=blue,
	urlcolor=blue
	]{hyperref}

\newtheorem{thm}{Theorem}

\newtheoremstyle{remark2}{1ex}{1ex}%
      {}% Body font
      {}% Indent amount (empty = no indent, \parindent = para indent)
      {\bf}% Thm head font. notice \YYshape command works unlike \textyy
      {.}% Punctuation after thm head
      {5pt}% Space after thm head (\newline = linebreak)
      {\thmname{#1}\thmnumber{ #2}\thmnote{ \slshape{(#3)}}} % Thm head spec, so numbering first
\theoremstyle{remark2}
\newtheorem{rem}{Remark}

\newtheoremstyle{remark3}{1ex}{1ex}%
      {\it}% Body font
      {}% Indent amount (empty = no indent, \parindent = para indent)
      {\bf}% Thm head font. notice \YYshape command works unlike \textyy
      {.}% Punctuation after thm head
      {5pt}% Space after thm head (\newline = linebreak)
      {\thmname{#1}\thmnumber{.#2}\thmnote{ \slshape{(#3)}}} % Thm head spec, so numbering first
\theoremstyle{remark3}

\newcommand{\1}{\mathds{1}}
\DeclareMathOperator{\essinf}{ess\,inf}

\usepackage{babel}
\input{ee.sty}

\@ifundefined{bibfont}{  } %%%% if then else
    {  }

\makeatletter
\renewenvironment{proof}[1][\bfseries\proofname]{\par
   \pushQED{\qed}%
   \normalfont \topsep6\p@\@plus6\p@\relax
   \trivlist
   \item[\hskip\labelsep
     #1\@addpunct{:}]\ignorespaces
}{%
   \popQED\endtrivlist\@endpefalse
}
\makeatother

\newcommand{\Comments}{1}
\newcommand{\mynote}[2]{\ifnum\Comments=1\textcolor{#1}{#2}\fi}
\newcommand{\mytodo}[2]{\ifnum\Comments=1%
  \todo[linecolor=#1!80!black,backgroundcolor=#1,bordercolor=#1!80!black]{#2}\fi}

\ifnum\Comments=1               % fix margins for todonotes
\fi

\ifnum\Comments=1               % fix margins for todonotes
\fi

\newcommand{\CRPS}{\operatorname{CRPS}}
\newcommand{\QSR}{\operatorname{QSR}}

\newcommand{\D}{\,\mathrm{d}}

\renewcommand{\E}{\mathbb{E}}
\renewcommand{\P}{\mathbb{P}}

\begin{document}

\baselineskip18pt
\renewcommand\floatpagefraction{.9}
\renewcommand\topfraction{.9}
\renewcommand\bottomfraction{.9}
\renewcommand\textfraction{.1}
\setcounter{totalnumber}{50}
\setcounter{topnumber}{50}
\setcounter{bottomnumber}{50}
\abovedisplayskip1.5ex plus1ex minus1ex
\belowdisplayskip1.5ex plus1ex minus1ex
\abovedisplayshortskip1.5ex plus1ex minus1ex
\belowdisplayshortskip1.5ex plus1ex minus1ex

\title{
Time preference effects in forecasting\thanks{YH gratefully acknowledges support of the Deutsche Forschungsgemeinschaft (DFG, German Research Foundation) through grants 460479886 and 531866675. We are grateful to Metaculus for granting access to their data, and we especially thank Nikos Bosse for his overall support and assistance with data transformation. We are also indebted to seminar participants at Goethe University Frankfurt, TU Chemnitz, and the Forecasting Research Institute.}
}

\author{
	Yannick Hoga\thanks{Faculty of Economics and Business Administration, University of Duisburg-Essen, Universit\"atsstra\ss e 12, D--45117 Essen, Germany, \href{mailto:yannick.hoga@vwl.uni-due.de}{yannick.hoga@vwl.uni-due.de}.}
	\and 
		Niklas V.~Lehmann\thanks{Faculty of Business Administration, TU Bergakademie Freiberg, Schlossplatz 1, D--09599 Freiberg, Germany, \href{mailto:Niklas-Valentin.Lehmann@vwl.tu-freiberg.de}{Niklas-Valentin.Lehmann@vwl.tu-freiberg.de}.}
}

\date{\today}
\maketitle

\begin{abstract}
	\noindent 
	We study the evaluation of forecasts regarding the timing and occurrence of uncertain future events, such as volcanic eruptions, the start of a war or the beginning of a recession. We show theoretically that a typical approach---evaluating the forecasts after the event occurred---incentivizes dishonest predictions if forecasters discount future rewards in favor of more immediate benefits. An empirical application to forecasting tournament data finds strong empirical evidence that forecasters adjust predictions in response to these incentives, implying that existing forecasts of such events are likely systematically overstating the probability of early occurrence. We conclude that rewarding such forecasts in an incentive-compatible way is inherently challenging.\\
	
	\noindent \textbf{Keywords:} Time Preferences, Incentives, Forecasting, Honest Reporting, Discounting  \\
	\noindent \textbf{JEL classification:} C18	(Methodological Issues); C44 (Statistical Decision Theory); C53	(Forecasting and Prediction Methods)
\end{abstract}

\doublespacing

\section{Introduction}

When collecting expectations and forecasts regarding disruptive future events, such as wars, technological breakthroughs, and volcanic eruptions, we would like to reward forecasters for their accurate assessment of the matter.\footnote{%For differentiation
The idea that we can create incentives for careful and honest assessments of probability judgments dates back at least to \citet{brier_verification_1950}. 
He found that there exist incentive-compatible payment schemes that provide the most money to forecasters who honestly share their expectation regarding future outcomes.}
However, it is difficult to incentivize forecasters for accurately sharing their expectation when we do not know when, or if, the event will occur. 
Specifically, it is common practice to evaluate forecasts after outcomes are known, which may distort incentives to report honestly  
if human forecasters discount potential rewards in the distant future relative to more immediate ones.

To illustrate, let us consider the question: When will there be a volcanic eruption, which ejects over 100 cubic kilometers of erupted material? 
An expert volcanic forecaster will have an \textit{expectation} regarding this event, assigning each future point in time a probability of such an eruption occurring. Assume we wanted to elicit this expectation, e.g., using a survey or an interview. We would want to make sure that the forecaster reports their honest expectation. Ideally, the forecaster would even refine their expectation by engaging in research relevant to the prediction task. How should such a forecast be evaluated and rewarded? Evaluation refers to determining whether the forecast has been ``good'' in the most general sense. This requires \textit{either} that the eruption has happened, upon which we can evaluate whether the forecast was indeed indicative of the eruption, \textit{or} that a certain time has passed, say 20 years without any such eruption. The problem is that most forecasters will likely care less about evaluations in 20 years time, and more about those in the short-term future. Since any evaluation of their forecast in the short-term future can \textit{only} be extraordinarily positive in the case of such an eruption (where the prediction will get attention and evaluation), the forecaster has an \textit{incentive} to report a higher probability of an early eruption than they truly expect there to be.

Against this backdrop, we formally investigate the incentive to misreport one's expectation when forecasters discount the future \textit{and} the predicted event can occur at arbitrary points in time, such that the timing of the reward is uncertain.
Our first main contribution is to show that \textit{every} incentive-compatible payment scheme that rewards accurate forecasts becomes incentive-\textit{in}compatible with honest reporting when forecasters discount the future and---as is common and often unavoidable---are evaluated upon occurrence.
Our second main contribution is to empirically investigate whether real human forecasters misreport their own expectations using data from a real-world forecasting tournament.
As predicted by our theory, we find strong evidence that forecasters respond to incentives to misreport their expectation. In discussing four potential solutions to this problem, we find that it is inherently challenging to provide incentives for honest reports when the predicted event can occur at any future time.

The hypothesis that we investigate in this article may strike readers as an unintuitive one. Why would forecasters be misreporting their own expectation?
There is a literature on probabilistic judgment in experimental economics, statistical decision theory and related fields that provides ample evidence of forecasters misreporting their expectation when given incentives to do so.
For example, \citet{armantier_eliciting_2013} find in a series of experiments that participants are susceptible to opportunities to hedge or avert risk when making probabilistic judgments.  Another reason to actively change the information reported via forecasts arises in competition with other forecasters \citep{ottaviani_strategy_2006}. Forecasters may try to actively stand out from the crowd \citep{witkowski_incentive-compatible_2023} or not be the one with the worst prediction, i.e., to make predictions more similar to the crowd in spite of contradictory private information.

Despite this large amount of literature \citep[reviewed by][]{charness_experimental_2021}, we are not aware of a single work that studies the issue of time preferences in forecasts. The only explicit mention of time preferences related to forecasts that we know of appears in \citet{chambers_dynamic_2021}, who state that ``time preferences [...] complicate the task of elicitation''.
The main reason for this seems to be that the subtle---yet important---distinction between scheduled and unscheduled future events has not been made in the past.
Historically speaking, the literature on forecasting and belief elicitation has mostly focused on events which are certain, or almost certain, to occur \textit{at one point in time} or form a time series \citep{zellner_survey_2021}. 
Henceforth, we refer to events that are unscheduled and can happen at multiple points in time as \textit{time-varying events}, and others as \textit{time-fixed events}, where resolution occurs at some scheduled (fixed) time in the future.

Time-fixed events are prevalent in macroeconomics (GDP growth next quarter, inflation one year ahead), meteorology (rainfall tomorrow, temperature next week) or finance (Value-at-Risk 10-days-ahead, interest rates in one month).
All of these events have in common that they are scheduled, that is, their verification is fixed in time.
However, a great number of events for which we would like to gather forecasts are not scheduled and can happen at arbitrary points in time. Examples include floods, volcanic eruptions, deaths, insolvency, technological breakthroughs, wars and whether a team will be eliminated in the playoffs. 
As \citet{vere-jones_forecasting_1995} puts it: ``[...] forecasting earthquakes differs from most routine forecasting problems in that it deals not with a discrete or continuous time series, but with sudden random events. But, although problems of this kind may be relatively uncommon in forecasting practice, they are certainly not unknown. In essence, they fall into the same category as forecasting lifetimes, for which there exist substantial literatures in the engineering (reliability), actuarial, and medical contexts.''

Indeed, \textit{Survival Analysis}---also known as reliability analysis, duration analysis or event history analysis---is an entire subfield which is predominantly concerned with the timing and occurrence probability of one-time events \citep{allison_event_2014}. 
Whilst there is overlap with the contents of this article, survival analysis deals less with forecasting and more with inference.
Usually, survival analysis is concerned with estimating the survival of a group of subjects (hence the name), e.g., estimating the life expectancy of a population. Furthermore, these estimates are often conditional on observed covariates, such as age and gender. Survival analysis may also include causal inference, such as inferring the effect of medication on life expectancy. 

Although there is some relation of our work to survival analysis, the most closely related paper to ours in spirit is \citet{Lea17}.
Similar to us, they also take a decision-theoretic perspective on the task of forecasting---understanding the forecaster as a self-interested and rational reporter of private information.
Specifically, \citet{Lea17} describe the Forecaster's Dilemma, where forecasts get outsized attention if, and only if, extreme events happen, such as after the 2008 financial crisis and the COVID-19 pandemic. This is equivalent to weighting the evaluation of forecasts conditional on specific outcomes, such as the occurrence of a crisis. \citet{GR11} show that any such weighting will inevitably distort optimal forecasts. In the context of \citet{Lea17}, forecasters that always predict calamity will be perceived as more accurate than skilled forecasters.
At their core, the Forecaster's Dilemma and the problem we investigate in this paper share the same underlying structure: in both cases, the evaluation of a forecast is not weighted equally across all possible real-world outcomes. Instead, certain outcomes cause the evaluation to matter more than others.
In the Forecaster's Dilemma, this imbalance arises externally---extreme outcomes attract disproportionate public attention to the forecast. In the setting we describe, the imbalance arises internally---forecasters who discount the future place greater weight on more immediate evaluations, and thus, more immediate outcomes.

The remainder of this article proceeds as follows. Section~\ref{sec:Binary_forecasts} investigates the incentive-compatibility of strictly proper scoring rules for binary probabilistic forecasts. 
Section~\ref{sec:Continuous-time forecasts} is concerned with the incentive-compatibility of the Continuous Ranked Probability Score, an error measure for distributional forecasts. Section~\ref{sec:Empirical analysis} features the empirical analysis of forecasters in a real-world forecasting tournament. Section~\ref{sec:Discussion} discusses the broader relevance of the results and outlines future research directions. Finally, Section~\ref{sec:Conclusion} concludes.

\section{Predicting the Occurrence of Future Events}\label{sec:Binary_forecasts}

We investigate the incentive to misreport by employing an illustrative example and leave a more formal investigation to Section~\ref{sec:Continuous-time forecasts}.
Let two players, Alice and Bob, compete against each other in a best-of-three series. The first player to achieve two victories wins the series. Alice has won the first matchup, and thus the standing is 1-0. Potential outcomes are displayed in Figure~\ref{fig:Game}.
We are interested in the probability of Alice winning the entire series. We therefore turn to a forecaster and ask: 

\begin{figure}
	\centering
		\includegraphics[width=0.5\textwidth]{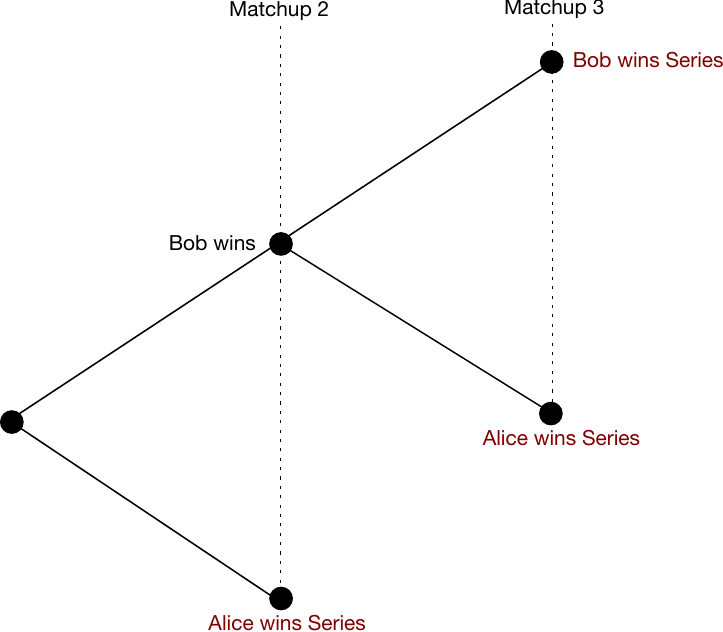}
    	\caption{Potential outcomes in the best-of-three series}\label{fig:Game}
\end{figure}

\begin{quote}
    Will Alice win the series?
\end{quote}

We denote the event of interest (that Alice wins the series) by $X \in \{0,1\}$.
Here, $X=1$ corresponds to the case where Alice has won the series. Let $T \in \{2,3\}$ indicate when Alice has won the series; $T=2$ is the case where Alice has won in matchup 2, and $T=3$ is the case where Alice has won in matchup 3. The event of Alice winning the series is binary and the forecast is a probability $G\in[0,1]$.
Throughout the paper, we use the terms report, forecast and prediction interchangeably.

In order to incentivize the forecaster to report honestly and accurately, we evaluate the forecast with a \textit{strictly proper scoring rule}\footnote{Strictly proper scoring rules are a widely employed tool to evaluate forecasts. Such scoring rules are functions with the property that the expected score is minimized \textit{only} by setting the reported variable $G$ equal to the expected outcome $\E[X]$ \citep{GR07}. Therefore, if a forecaster issues a forecast $G$ and gets a fixed reward from which we subtract the score $S(G,X)$ when $X$ materializes, then---on average---she can do no better than to issue the true forecast $G=\E[X]$ to maximize her expected reward (or minimize the expected score).
In this sense, the score incentivizes honest forecasts or, more lyrically, serves as a ``truth serum''.} $S(G,X)$ and reward the forecaster proportionally to the negated score as soon as the outcome $X$ is known.
The forecaster is a risk-neutral score minimizer and discounts the future.\footnote{That is, we ignore risk aversion throughout this analysis. We conjecture that the problem remains even without risk-neutrality. However, risk aversion is then an additional reason why the forecaster may report a dishonest forecast \citep{hossain_binarized_2013}.} Without loss of generality, we assume that the forecaster discounts between matchups 2 and 3 at rate $r>0$; that is, a payoff realized after matchup 2 is valued $(1+r)$-times as much as the same payoff realized after matchup 3.

The forecaster now has an incentive to misreport their own true expectation $\E[X]$. 
This is because the forecaster gives extra weight ($1+r$) to the case where Alice wins the series in matchup 2, inflating the reported probability that Alice will win the series. Next, we show that the error-minimizing forecast $G$ is strictly greater than the forecasters true expectation $\E[X]$.
We can integrate the discount rate $r$ directly into the forecaster's valuation of her expected score and call this a \textit{discounted scoring rule} $S_d$:%\footnote{For our purposes, it does not matter whether we model Chloe's score as a net gain or loss to her, as we can add\yannick{but this is about multiplication by $-1$!} any arbitrary constant reward to Chloe's payout without affecting her incentives \citep{GR07}.}
\begin{align*}
	S_d(G, X) =\begin{cases}
	S(G,X) \times (1+r), & \text{if}\ T=2,\\
	S(G,X), & \text{else.}
	\end{cases}
\end{align*}

We denote the true expectation of Alice winning the series as $\E[X] = \P\{T=2\} + \P\{T=3\}$, the sum of probabilities of Alice winning either in match 2 or 3.
If the forecaster were to minimize the expected undiscounted score, then they could do no better than to issue $G=\E[X]$ (because this maximizes the expected payoff, $-\E[S(G,X)]$).
However, because of the forecasters present bias, they minimize $\E[S_d(G,X)]$, issuing a different forecast.\footnote{Throughout the paper, we simply take ``present bias'' to mean that earlier payoffs are preferred to later payoffs of the same amount. However, there is a large literature in economics where the term is understood more narrowly \citep{OR15,Cha21}. Specifically, in behavioral economics, present bias (or also: the immediacy effect) refers to the tendency to favor a smaller reward available immediately over a larger reward received later, while reversing this preference when both rewards are shifted equally into the future.}
Specifically, by straightforward computations, the forecaster's expected score is 
\begin{align}
    \E[S_d(G,X)\mid X=0] & = S(G,0), \notag\\
    \E[S_d(G,X) \mid X=1] & = S(G,1) \times  (1+\P\{T=2 \mid X=1\}r).\label{eq:DS}
\end{align}
The added factor in \eqref{eq:DS} is essentially an additional constant weight.
This suffices to distort incentives for honest reporting as multiplying any strictly proper scoring rule with a constant factor conditional on outcomes, i.e., putting more weight on one outcome, leads to a predictably improper scoring rule. For the proof, we refer to Lemma 4 in \citet{lindley_scoring_1982}, which---in the context of forecasting---is accessibly explained by \citet[p.~197]{parmigiani_decision_2009}. The optimal forecast is no longer the true expectation.

We illustrate this result by demonstrating the improperness of the discounted quadratic scoring rule ($\QSR_d$), where the quadratic scoring rule (or also: Brier score)
\begin{equation}\label{eq:(QSR)}
	\QSR(G,X) = (X-G)^2
\end{equation}
is one specific strictly proper scoring rule. 
The expected score then is
\begin{align}
    \E[\QSR_d(G,X) \mid X=0] & = G^2  \label{eq:dS^2_1},  \\ 
    \E[\QSR_d(G,X) \mid X=1]    &= (1-G)^2 \times (1+\P\{T=2 \mid X=1\}r).\label{eq:dS^2_2}
\end{align}

From Lemma 4 of \citet{lindley_scoring_1982} we know that the relationship between the expectation $\E[X]$ and the reported forecast $G$ (that minimizes $\E[S_d(G,X)]$) is given by
\begin{equation*}
    \mathbb{E}[X] = \frac{\frac{\partial}{\partial G} (G^2)}{\frac{\partial}{\partial G} (G^2) - \frac{\partial}{\partial G} \big[ (1-G)^2 \times (1+\P\{T=2 \mid X=1\}r)\big]}.
\end{equation*}
Solving for $G$ and re-arranging gives that
\begin{equation}
    G = \frac{ \E[X]+\P\{T=2\}r}{1+\P\{T=2\} r}.\label{eq:Chloe_pred}
\end{equation}

It is obvious that the error-minimizing forecast $G$ is strictly greater than $\E[X]$ for $r>0$. If either time preferences $r$ or the possibility for the event to occur early $\P\{T=2\}$ is set to zero, then \eqref{eq:Chloe_pred} yields that the error-minimizing report is $G=\E[X]$.

Figure~\ref{fig:Chloe_calibration} plots the optimal forecast on the horizontal axis and the true expectation on the vertical axis, setting $\P\{T=2\}= 0.2$. The blue lines represent the score-minimizing forecast for different $r>0$. The red line corresponds to the 45 degree diagonal, where $G=\E[X]$. We see that the reported probability increases with a rising discount factor.

\begin{figure}[t!]
	\centering
		\includegraphics[width=0.6\textwidth]{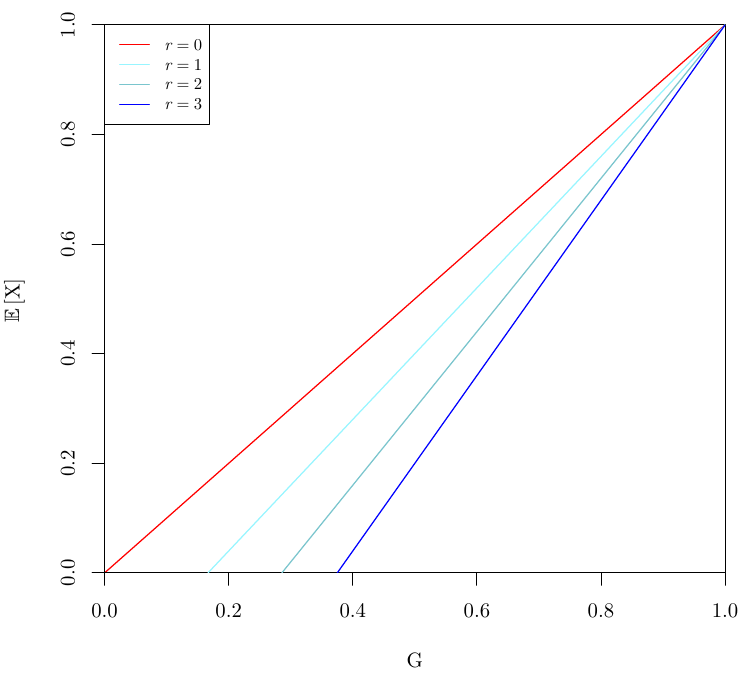}
    	\caption{The optimal forecast no longer equals the true expectation}\label{fig:Chloe_calibration}
\end{figure}

Clearly, there is an incentive to report a probability that is too high. This is not a general result. For example, had the forecaster been asked to forecast if Bob will win the series, i.e., had been asked to predict the complement, they would have been incentivized to report a probability that is lower than truly expected, as Bob can only lose early. When the time-varying event can both occur early and be ruled out early, the incentive to misreport exists, but it is not obvious in which direction forecasts will be influenced. For example, imagine we collected forecasts on whether a certain person will succeed the CEO of a company at the end of her/his term. This can happen at any time, but it can also be ruled out by a premature death of the candidate, a lifelong jail sentence, the collapse of the company or the appointment of another CEO.

\begin{rem}
The argument extends \textit{mutatis mutandis} to any number of time steps greater than two (and, by taking limits, to continuous time), as well as to any discounting scheme and number of potential outcomes. The only property required is that discounting induces \emph{outcome-dependent} weights, i.e., that the realized score is effectively multiplied by a factor that differs between at least two outcomes. In that case Lemma~4 of \citet{lindley_scoring_1982} applies and the resulting discounted score is (in general) no longer strictly proper.
In this case, \eqref{eq:dS^2_1}--\eqref{eq:dS^2_2}---and the optimal forecast $G$---change accordingly.
\end{rem}

\section{Predicting the Timing of Future Events}\label{sec:Continuous-time forecasts}

While Section~\ref{sec:Binary_forecasts} investigates probabilistic forecasts for the \textit{occurrence} of future events, we now investigate the incentive to misreport forecasts regarding the \textit{timing} of future events.
For technical ease, we treat time as continuous now. The event of interest, $X$, remains binary. The forecasting question thus is: 

\begin{center}
	When will an event $X$ happen during $[0,\tau)$?
\end{center}

Denote by $T\in[0,\tau]$ the random point in time when $X$ occurs.
Here, $T=\tau$ corresponds to the case where $X$ occurs at time $\tau$, or at some later time or, possibly even, never.
Denote the \textit{true} cumulative distribution function (cdf) of $T$ by $F(t)=\P\{T\leq t\}$.
Then, $F(\tau-)=\lim_{t\uparrow \tau}F(t)$ denotes the probability that $X$ occurs during $[0,\tau)$, and $1-F(\tau-)$ denotes the probability that $X$ occurs in $\tau$, afterward or never.
We denote a generic forecast for the cdf of $T$ by $G(\cdot)$.
A popular score to rank different distributional forecasts for $T$ is the continuous ranked probability score (CRPS), defined as
\[
	\CRPS(G,T) = \int_{-\infty}^{\infty}\big[G(t) - \1_{\{t\geq T\}}\big]^2\D t,
\]
where $G$ is a generic cdf forecast and $T$ is a verifying realization. The CRPS is essentially an extension of the quadratic scoring rule from \eqref{eq:(QSR)} to continuous variables; see \citet{MW76}, \citet{Her00} and \citet{GR07} for more details.
Therefore, this section extends the previous one.
The CRPS is \textit{strictly proper} (relative to the class of probability measures with finite first moments) in the sense that
\[
	\E\big[\CRPS(F,T)\big] < \E\big[\CRPS(G,T)\big]\quad\text{for all }G\neq F,
\]
where $F$ and $G$ possess finite first moments, and $T\sim F$ \citep[Sec.~4.2]{GR07}. 

In our setting with $T\in[0,\tau]$, the CRPS simplifies to
\[
	\CRPS(G,T) = \int_{0}^{\tau}\big[G(t) - \1_{\{t\geq T\}}\big]^2\D t.
\]
Note that once $X$ is realized, say at $T=t$, then the reward $\CRPS(G,t)$ for issuing forecast $G$ can be computed immediately, such that there is no need to wait until the terminal time $\tau$.
Therefore, the reward of $\CRPS(G,t)$ can be paid out at time $t$ (i.e., the point in time when $X$ occurs).
As in Section~\ref{sec:Binary_forecasts}, if this occurs, present-biased forecasters may be inclined to discount later payoffs (where $t$ is larger) more heavily than earlier ones.
Since present bias is one of the most robust features of human behavior, we model the payoff $\CRPS(G,t)$ as being discounted by a factor of $e^{-rt}$. The choice of the discount function is arbitrary and will be lifted later (see Remark~\ref{rem:discount_f}). 
Here, $r>0$ corresponds to a continuously compounded discount rate, with larger values of $r$ implying higher discounting of later rewards.
In light of these considerations, forecasters may not actually minimize the expected CRPS, but (implicitly or explicitly) minimize the expected \textit{discounted} CRPS
\begin{equation}\label{eq:dCRPS}
	\CRPS_d(G,t)=e^{-rt} \CRPS(G,t).
\end{equation}
When this happens, the optimal forecast no longer equals the true report, as we show next.

\begin{thm}\label{thm:main}
On the class of cdfs pertaining to distributions with compact support, the expected discounted CRPS, i.e., $\E\big[\CRPS_d(G,T)\big]$, is uniquely minimized by the cdf
\[
	G(t) = F(t)\frac{\E\big[e^{-rT}\mid T\leq t\big]}{\E\big[e^{-rT}\big]}.
\]
Moreover, $G(t)\geq F(t)$ for all $t\geq0$, such that it is rational to forecast earlier changes.
\end{thm}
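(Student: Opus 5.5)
Since $T\in[0,\tau]$, I will use Fubini/Tonelli to swap expectation and the time integral, which turns the problem into a pointwise minimization over $t$. Write
\[
	\E\big[\CRPS_d(G,T)\big] = \int_0^\tau \E\Big[e^{-rT}\big(G(t)-\1_{\{t\geq T\}}\big)^2\Big]\D t .
\]
This is legitimate because the integrand is nonnegative and bounded on $[0,\tau]$. For fixed $t$, expand the inner expectation as a quadratic in the scalar $g=G(t)$:
\[
	g^2\,\E\big[e^{-rT}\big] - 2g\,\E\big[e^{-rT}\1_{\{T\le t\}}\big] + \E\big[e^{-rT}\1_{\{T\le t\}}\big].
\]
Its leading coefficient $\E[e^{-rT}]\ge e^{-r\tau}>0$, so the quadratic is strictly convex. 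It is uniquely minimized at
\[
	g^*(t)=\frac{\E\big[e^{-rT}\1_{\{T\le t\}}\big]}{\E[e^{-rT}]}=F(t)\frac{\E[e^{-rT}\mid T\le t]}{\E[e^{-rT}]}.
\]
When $F(t)=0$ the conditional expectation is undefined, so I read the product as $0$.

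Next I check that the pointwise minimizer $G^*:=g^*$ is admissible, i.e.\ a cdf with compact support. It is nondecreasing and right-continuous, since $t\mapsto\E[e^{-rT}\1_{\{T\le t\}}]$ is the cdf of the finite measure $e^{-rs}\D F(s)$. It vanishes for $t<0$ and equals $1$ for $t\ge\tau$. Hence $G^*$ is the cdf of the tilted law $e^{-rs}\D F(s)/\E[e^{-rT}]$, which is supported in $[0,\tau]$.

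For uniqueness, take any admissible $G$ that differs from $G^*$. Two right-continuous functions that differ at one point differ on a set of positive Lebesgue measure. One subtlety is that a competitor $G$ may have support beyond $\tau$, yet the score integrates only over $[0,\tau]$. I will handle this by noting that the score is computed over $[0,\tau]$ in this setting. Consequently any $G$ that agrees with $G^*$ on $[0,\tau)$ gives the same score, and I will interpret uniqueness as uniqueness on $[0,\tau)$, or restrict to cdfs supported in $[0,\tau]$. Under that reading, the difference set has positive measure, and strict convexity at each such $t$ makes the expected score strictly larger. This bookkeeping is the main delicate point; the rest is routine.

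Finally, to show $G^*(t)\ge F(t)$ it suffices to prove $\E[e^{-rT}\mid T\le t]\ge\E[e^{-rT}]$ whenever $F(t)>0$. Decompose
\[
	\E[e^{-rT}]=F(t)\,\E[e^{-rT}\mid T\le t]+(1-F(t))\,\E[e^{-rT}\mid T>t].
\]
On $\{T\le t\}$ we have $e^{-rT}\ge e^{-rt}$, and on $\{T>t\}$ we have $e^{-rT}< e^{-rt}$. So the conditional mean on $\{T\le t\}$ is at least the conditional mean on $\{T>t\}$, and the overall mean is a convex combination of the two. This yields the inequality, which is equivalently the positive association of the decreasing functions $e^{-rT}$ and $\1_{\{T\le t\}}$ (Chebyshev's covariance inequality). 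Therefore $G^*\ge F$, meaning the optimal forecast stochastically shifts mass toward earlier occurrence.
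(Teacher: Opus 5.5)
Your proof is correct, and it takes a more elementary route than the paper. The paper also writes $\E[\CRPS_d(G,T)]$ as $\int_0^\tau \mathcal{L}(t,G(t))\D t$, but then handles it with the calculus of variations:
\begin{itemize}
\item it applies Dacorogna's Euler--Lagrange theorem with $G(t)$ in the derivative slot;
\item it obtains $\partial_G\mathcal{L}(t,G(t))=c$ for a constant $c$ and fixes $c$ through the terminal condition $G(\tau)=1$;
\item it gets uniqueness from Dacorogna's existence theorem, after verifying growth and strict-convexity hypotheses.
\end{itemize}
You instead note that the integrand contains no derivative of $G$. After Tonelli the problem therefore decouples in $t$, and the pointwise minimizer is explicit. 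The cdf constraint never binds, because that minimizer is itself the cdf of the tilted law $e^{-rs}\D F(s)/\E[e^{-rT}]$. The variational machinery would only pay off if the loss involved $G'$.

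Your route gives a self-contained proof with no growth conditions. It also makes clear that the first-order condition holds with $c=0$ at every $t<\tau$, and that the value at the Lebesgue-null point $t=\tau$ plays no role. This is exactly where the paper's derivation needs care when $T$ has an atom at $\tau$. Substituting the paper's constant $c/2=[1-F(\tau-)]e^{-r\tau}$ into its first-order condition would add $\P\{T=\tau\}e^{-r\tau}/\E[e^{-rT}]$ to $G(t)$ for every $t<\tau$. So the paper's subsequent ``law of total expectation'' identity only holds with $c=0$, which is what your argument delivers.

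For $G\ge F$, the paper appeals to monotonicity of $t\mapsto\E[e^{-rT}\mid T\le t]$. That map is in fact non-increasing, despite the paper's wording. Your convex-combination/covariance argument proves the needed inequality $\E[e^{-rT}\mid T\le t]\ge\E[e^{-rT}]$ directly.

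Finally, you can drop your hedge on uniqueness. Use the CRPS integrated over the whole real line, as originally defined. A compactly supported competitor with mass outside $[0,\tau]$ then picks up the extra term $\int_{-\infty}^0 G^2\D t+\int_\tau^\infty (1-G)^2\D t$. By monotonicity and right-continuity of $G$, this term is strictly positive. Hence your minimizer is unique among all compactly supported cdfs, not merely on $[0,\tau)$.
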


\begin{proof}
The main idea of the proof of Theorem~\ref{thm:main} is to write $\E\big[\CRPS_d(G,T)\big]$ as an integral from $0$ to $\tau$.
Then, we apply the Euler--Lagrange method from the calculus of variations to find the minimizer of this integral.
See Section~\ref{S-sec:proof thm} of the Online Supplement for a detailed proof.
\end{proof}

\begin{rem}\label{rem:discount_f}
\begin{enumerate}[(i)]

\item The fact that the expected discounted CRPS is only minimized in the class of cdfs pertaining to distributions with compact support is not restrictive, because $T\in[0,\tau]$ by construction.

\item	We use a continuously compounded discount in \eqref{eq:dCRPS}. However, the exponential function $e^{-rt}$ may be replaced by any other discounting function $d(t)$, as long as it is non-increasing in $t$.
The proof of Theorem~\ref{thm:main} then goes through verbatim for the unique minimizer
\[
	G(t) = F(t)\frac{\E\big[d(T)\mid T\leq t\big]}{E\big[d(T)\big]}.
\]

\end{enumerate}

\end{rem}

\begin{rem}
Our setup, where a strictly proper score is premultiplied by a certain factor in \eqref{eq:dCRPS}, is reminiscent of the situation described by \citet[Eqn.~(2.9)]{Lea17} in their presentation of the forecaster's dilemma.
Adapted to our situation, they show that the expected weighted score $\E\big[w(t)\CRPS(G,t)\big]$ is minimized by the predictive density
\[
	g(t)=\frac{w(t)h(t)}{\int w(z)h(z)\D z},
\]
if $T$ has density $h$; see, in particular, the proof of Theorem~1 in \citet{GR11}.
This is consistent with our Theorem~\ref{thm:main}.
However, our Theorem~\ref{thm:main} is more general by not requiring $T$ to have a density. 
This is crucial, as $T$ has a natural point mass in $\tau$, capturing the probability that $X$ occurs only after time point $\tau$.
It is this generality that also leads to a much more involved proof compared with \citet[Proof of Theorem~1]{GR11}, where we need to draw on results from variational calculus.\footnote{The existence of a density of $T$ also allows \citet{GR11} to consider any strictly proper score (not just the CRPS) and any kind of weight function (instead of the exponential discount function).}
\end{rem}

\begin{rem}\label{rem:no discount}
It is instructive to consider the two extreme cases of no discounting (i.e., $r=0$) and increasingly heavier discounting (i.e., $r\to\infty$).
With $r=0$, Theorem~\ref{thm:main} implies
\[
	G(t) = F(t),
\]
which is the true cdf one would have expected from strict propriety of the (non-discounted) CRPS.

When $r\to\infty$, the situation is a bit more complicated.
Define $a=\essinf T$ to be the essential infimum of $T$.
Then, we show in Section~\ref{S-sec:proof eqn} of the Online Supplement that
\begin{equation}\label{eq:G r inf}
	\lim_{r\to\infty}G(t) = \begin{cases}0,& t<a,\\
	1,& t>a\\
	\1_{\{\P\{T=a\}>0\}}, & t=a.\end{cases}
\end{equation}
The see the intuition for this result, write
\[
	G(t)=\frac{\E\big[e^{-rT}\1_{\{T\leq t\}}\big]}{\E\big[e^{-rT}\big]}=\frac{\E\big[e^{-r(T-a)}\1_{\{T\leq t\}}\big]}{\E\big[e^{-r(T-a)}\big]}.
\]
As $r\to\infty$, the factor $e^{-r(T-a)}$ forces the expectation to concentrate on values of $T$ arbitrarily close to its essential infimum $a$.

To summarize the three cases when $r\to\infty$, we find that the optimal forecast $G(t)$ predicts an immediate occurrence of $X$ at the essential infimum of $T$.
That is, as soon as there is some chance that $X$ will occur, an ``infinitely impatient'' forecaster should place all probability mass at that point in time.
In particular, once it is clear that $X$ will not occur during $[0,\tau)$, then the true report $G=F$ minimizes the loss even for $r\to\infty$.
\end{rem}

\begin{figure}[t!]
	\centering
		\includegraphics[width=\textwidth]{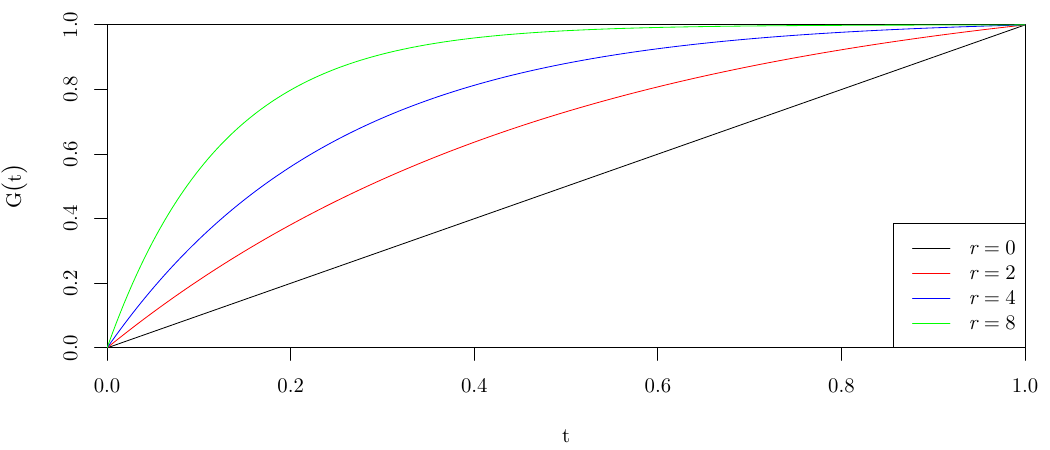}
	\caption{Plot of $G$ from \eqref{eq:Gr} for different values of $r$ and fixed $\tau=1$.}
		\label{fig:1}
\end{figure}

We now graphically illustrate how the optimal forecast shifts under discounting for various intermediate $r\in(0,\infty)$.
To do so, we assume that $T\sim\mathcal{U}[0,\tau]$, i.e., $T$ follows a uniform distribution supported on $[0,\tau]$.
The true cdf of $T$ is then given by $F(t)=t/\tau$ for $t\in[0,\tau]$ (see the black line in Figure~\ref{fig:1}).
It is easy to show that $\E[e^{-rT}\mid T\leq t]=\frac{1}{rt}(1-e^{-rt})$ in this case, such that
\begin{equation}\label{eq:Gr}
	G(t)=F(t) \frac{\frac{1}{rt}(1-e^{-rt})}{\frac{1}{r\tau}(1-e^{-r\tau})}=\frac{1-e^{-rt}}{1-e^{-r\tau}}.
\end{equation}
Therefore, under discounting, the optimal forecast is no longer a uniform distribution, but a truncated exponential distribution (with rate $r>0$).
Figure~\ref{fig:1} plots the above cdfs $G$ for different values of $r$.
In the case of no discounting (i.e., $r=0$), we have that $G=F$ (black line).
As $r$ increases, more probability mass is placed on earlier times, as the red, blue and green lines in Figure~\ref{fig:1} show.

\begin{rem}\label{rem:equiv}
Our above analysis shows that present-biased forecasters have an incentive to predict earlier occurrences of events.
Almost mechanically, this also translates into a higher predicted probability of the event occurring at all.
To see this, use Theorem~\ref{thm:main} and write
\begin{equation*}
	\E_{G}[X] = \P_{G}\{X=1\}= G(\tau-)\geq F(\tau-)=\P\{X=1\}=\E[X],
\end{equation*}
where $\P_{G}$ denotes the reported probability of the forecaster.
Therefore, we obtain a similar result as in Section~\ref{sec:Binary_forecasts} that the \textit{occurrence} of future events is reported to be more likely under present bias.
\end{rem}

\section{Empirical analysis}\label{sec:Empirical analysis}

\subsection{Preliminaries}

We now examine whether actual human forecasters alter their reports in response to the possibility of early payoffs.
In doing so, we build on the theoretical framework introduced in Section~\ref{sec:Binary_forecasts} for discrete time and extended to continuous time in Section~\ref{sec:Continuous-time forecasts} (cf.~Remark~\ref{rem:equiv}), where forecasts $G$ regarding the probability of future events $X$ are considered.
Here, we specifically analyze forecasts made on Metaculus---a reputation-based, massive online forecasting platform.
This allows us to identify the causal effect of time preferences---henceforth called \textit{time preference effects}---on reports, because the platform features both time-fixed events and time-varying events. 
Examples of time-fixed events from Metaculus are: 

\begin{itemize}
    \item At the Paris Summer Olympics in 2024, will the men's 100m dash winning time break the Olympic record of 9.63s?
    \item Will Republicans control the US Senate after the 2022 election?
    \item Will Apple announce a Virtual Reality headset at WWDC in 2023?
\end{itemize}
The verification dates of these questions are pre-specified. 
Examples of time-varying events from Metaculus are: 

\begin{itemize}
    \item Will Ukraine sever the land bridge between Crimea and Russia before 2024?
    \item Will Joe Biden announce before July 15, 2024 that he will not accept the Democratic Party's nomination for President?
    \item Will Emmanuel Macron dissolve the French National Assembly before the end of his term?
\end{itemize}
For these time-varying events it is not clear on which day outcomes can be verified.

At Metaculus, reputational tokens are awarded based on the logarithmic error of the forecasts. 
In principle, this is incentive-compatible, as the logarithmic error is strictly proper \citep{GR07}.
In practice, however, reputational tokens are awarded immediately after the outcome is determined by content moderators: for time-fixed events, this happens shortly after the scheduled date; for time-varying events, this happens soon after the event occurs, or after the scheduled end date if the event does not occur. Consequently, this introduces a time preference effect.\footnote{Additionally, Metaculus runs tournaments where there may be monetary prizes and ranks to be claimed. Most predictions are not monetarily incentivized. Any monetary reward would only materialize after a fixed time, so the monetary incentive cannot cause the behavior that we describe in the paper. Rather, the monetary reward should moderate it.}

Assessing distorted reporting behavior is a fundamentally difficult task because we can never observe true expectations.
However, if forecasters do respond to incentives to misreport due to time preferences, we should observe a systematic bias in forecasts for time-varying events that is not present in forecasts for time-fixed events. Any such systematic bias will be observable through the \textit{calibration} (sometimes also called \textit{empirical reliability}) which is simply how predictions correspond to actually observed frequencies of events \citep{parmigiani_decision_2009}. That is, ``a forecaster is well-calibrated if, for example, of those events to which he assigns a probability 30 percent, the long-run proportion that actually occurs turns out to be 30 percent'' \citep{dawid_well-calibrated_1982}. 
Forecasts on time-varying events should be systematically ``off the mark'',
as shown in Figure~\ref{fig:Chloe_calibration}.
Thus, keeping everything else constant, we interpret any systematic difference in calibration between predictions on time-fixed events and forecasts on time-varying events as attributable to time preference effects.

\subsection{Data: Metaculus forecasting tournament}\label{sec: Data}

To empirically examine incentives for misreporting, we analyze all binary events listed on the Metaculus platform at the time of data collection in August 2024 ($n=2005$). We denote whether the event occurred as $X \in \{0,1\}$; i.e., they could either occur or not. 775 events were classified by hand as time-fixed events, 798 as time-varying events that can only occur early, 197 as time-varying events that can only be ruled out early.
The remaining 235 events were classified as ambiguous, either because these events can both occur early and be ruled out early (the incentive to misreport is intangible) or due to ambiguity as to when the event would be considered resolved.\footnote{One example from our data includes the question ``Will Liverpool win the 2021--2022 Premier League?''. As Liverpool can theoretically secure winning the Premier League before the last game, or lose any chance of winning the Premier League before the last game, it is not clear which incentive forecasters may have to strategically misreport. This depends on how probable they deem either scenario, which can be heterogeneous.}
We only consider events that \textit{must have been verified} by August 2024, as including events with outcomes that could have been unobservable by August 2024 introduces a ``surprisingly early bias''.% \citep{lehmann_surprisingly-early_2025}.

\begin{table}[t!]
    \centering
    \begin{threeparttable}
    \caption{Statistics of the forecasting tournament's events}
    \begin{tabular}{ccc}
			\toprule
         & Time-fixed events & Time-varying events \\
         \hline
       count  & 775 & 798\\
       event occurred & 0.308 & 0.258 \\
       mean prediction  & 0.38 & 0.37 \\
       median prediction  & 0.35 & 0.32 \\
       mean number of predictions per event & 202.5 & 257.8\\
       median number of predictions per event & 106 & 161\\
			\bottomrule
    \end{tabular}
    \label{tab:check_for_balance}
    \begin{tablenotes}
      \small
      \item \textit{Note.} This table reports basic statistics for time-fixed events and time-varying events. Since the two essentially constitute treatment and control in our quasi-experimental setting, it is reassuring that the two groups are roughly similar in basic characteristics. 
    \end{tablenotes}
    \end{threeparttable}
\end{table}

Table~\ref{tab:check_for_balance} reports basic statistics for the two biggest groups of events. We see that (i) some events have received far more predictions than others, because the mean is far higher than the median, and that (ii) time-varying events received more predictions on average.
Apart from that, the sets of events are relatively similar.

We can gather suggestive visual evidence for the presence of misreporting by plotting the calibration curves for time-varying event forecasts and time-fixed event forecasts from our entire sample. 
In the analysis we only include time-varying events which can \textit{only occur early}, i.e., turn out to be 1, as opposed to events which can be ruled out early, i.e., turn out to be 0.\footnote{We add the latter group of events in a robustness check that is included in the Online Supplement. We find that the inclusion of this sample does not affect outcomes in a meaningful way and provides additional strong evidence for the presence of time preference effects.}
To estimate aggregate calibration, we take all forecasts that range from 1\% to 99\% assigned probability, and plot the association with the share of events that did occur, i.e., the frequency of the verified events, which results in Figure~\ref{fig:calibration}.

\begin{figure}[t!]
	\centering
		\includegraphics[width=\textwidth]{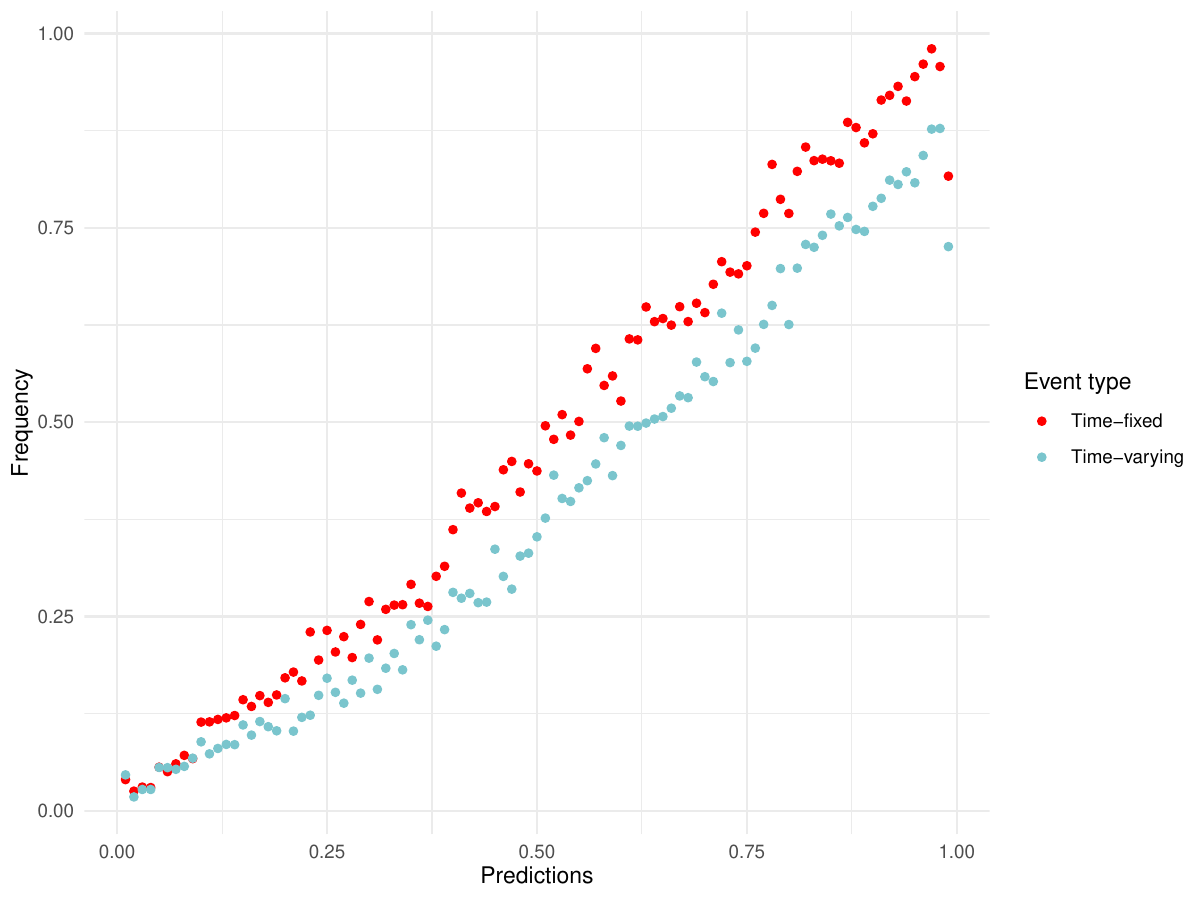}
        \caption{Metaculus forecasters' aggregate calibration}
        \vspace{1ex}
    \parbox{0.95\textwidth}{%
      \small
      \textit{Note.} This figure presents the calibration curve---often also called reliability curve---of 202 Metaculus forecasters meeting the criteria laid out in the main text. Each point assigns the reported forecasts the frequency with which actual events occurred. For example, of all 25\% probability forecasts, 23.2\% of the associated time-fixed events occurred, and 17.1\% of the associated time-varying events occurred.
    }
		\label{fig:calibration}
\end{figure}

 In Figure~\ref{fig:calibration} we see that time-varying events are systematically reported to be more probable conditional on their observed frequency. Interestingly, forecasts on time-fixed events are almost perfectly calibrated. This means that forecasts roughly correspond to the frequencies of time-fixed events, whereas forecasts on time-varying events are systematically too high, just as we would expect to see if the cause of this were time preference effects.

\begin{table}[t!]
    \centering
    \begin{threeparttable}
    \caption{Percentage share of predictions conditional on topic}
    \begin{tabular}{rrr}
		\toprule
                                & Time-fixed  & Time-varying\\
                                & $(n=94{,}707)$ & $(n=120{,}874)$\\
                                \midrule
        Politics                &  39.2 & 22.7 \\
        Elections               &  25.3 &  2.8 \\
        Economy \& Business     &   5.1 & 16.3 \\
        Geopolitics             &   3.5 & 14.6 \\
        Health \& Pandemics     &   2.5 & 11.9 \\
        Sports \& Entertainment &   9.0 &  2.5 \\
        Donald Trump            &   4.6 &  7.0 \\
        Computing \& Math       &   3.9 &  1.0 \\
        Technology              &   3.2 & 10.5 \\
        Ukraine                 &   1.5 &  3.1 \\
        Elon Musk               &   3.0 &  0.5 \\
        Natural Sciences        &   0.5 &  5.2 \\
        Russia                  &   1.0 & 10.3 \\
        Epidemiology            &   0.7 &  6.3 \\
        AI                      &   2.0 &  3.0 \\
        Virology                &   0.5 &  4.5 \\
        Other                   &  41.4 & 24.2 \\
			\bottomrule
    \end{tabular}
    \label{tab:predictions|topics}
    \begin{tablenotes}
      \small
      \item \textit{Note.} Each cell reports the fraction of events that fall into a given topic, conditional on the type of event. The two columns do not add up to 1 as an event can be assigned to multiple topics. Time-fixed questions are far more often about Politics and specifically Elections, whereas time-varying events are more often about Economy \& Business, Geopolitics, Health \& Pandemics and Technology. ``Other'' collects residual topics that are not explicitly listed here.
    \end{tablenotes}
    \end{threeparttable}
\end{table}

We cannot claim the calibration difference in Figure~\ref{fig:calibration} to be a causal effect of the incentives to misreport before addressing two potential sources of confounding. 
Firstly, there may be significant differences between the two sets of events---time-varying and time-fixed. 
Although the Metaculus forecasting tournament is full of rich and diverse events, some events are by their very nature more often time-varying, such as events related to technological progress (``When will a SpaceX Starship reach orbit?''). Other events are by their very nature fixed in time, such as elections. 
Events on the Metaculus platform are automatically assigned a number of topics.\footnote{See: \texttt{https://www.metaculus.com/notebooks/21576/streamlined-llm-enhanced-question-discovery/}} Table~\ref{tab:predictions|topics} reports how predictions are distributed conditional on topics and event type. We observe that there are great differences between the two sets of predictions. Table~\ref{tab:predictions|topics} includes only the most frequent topics. Predictions on topics that are not included in Table~\ref{tab:predictions|topics} are labeled ``Other''. The topics are not exclusive, i.e., events labeled ``Politics'' can also be labeled ``Elections''.
Now, if forecasters consistently misjudge certain types of events---such as being overly optimistic about technological progress---then this bias may be associated with the event classification as time-varying or time-fixed, potentially confounding the effect of interest. To address differences in topics, we balance our sample so that topics are represented equally, as described in Section~\ref{sec: IPTW}.

Furthermore, we must acknowledge that forecasters are not forced to make forecasts on any particular event.
Therefore, the second source of confounding is that time-fixed events and time-varying events may attract systematically different forecasters, potentially causing differences in calibration. We address this by using a within-subject design, eliminating this potential source of confounding. This is described in Section~\ref{sec: Regression analysis}.

To preview the main results, we find that the aggregate picture in Figure~\ref{fig:calibration} remains intact, and that time preference effects are a powerful driver of rationally dishonest reporting,
as none of the added adjustments for (i) topics and (ii) forecaster self-selection meaningfully affects the results.

Finally, one might be inclined to view question difficulty or ``inherent unpredictability'' to be an additional confounder. After all, forecasts on time-fixed questions may be objectively easier or harder than forecasts on time-varying events. Whilst that may very well be true, differences in inherent unpredictability should not confound our analysis because we are studying the calibration, not the \textit{precision} or sharpness of forecasts. Fundamentally, strictly proper scoring rules---such as the quadratic scoring rule---can be decomposed into error from systematic bias (calibration) and ``how spread out the forecasts are'' (precision/sharpness) \citep{brocker_reliability_2009, parmigiani_decision_2009}. Changes in inherent unpredictability should affect precision, not the calibration.\footnote{
We can use archery as an analogy to explain the problem. An archer's accuracy is judged by how close their arrows land to the center of the target. Ideally, the archer should not systematically miss the target. The average landing spot of the arrows should be near the center. If the arrows tend to gather away from the center, the archer is miscalibrated, meaning their shots are consistently off target.
How difficult it is to hit the target depends on how far the archer is from it. The farther away they stand, the harder it is, and the more spread out the shots will be. We assume that being further away doesn't mean the archer becomes miscalibrated; the shots are just less accurate, not consistently off-center. Similarly, we assume that forecasters calibration does not change with question difficulty, although the precision of forecasts will be affected.
As we will see later, the data do support the assumption that calibration is independent of question difficulty. If question difficulty was indeed a strong confounder, impacting calibration, we would see a drastic change in calibration as a result of controlling for topics, which should mix up question difficulty. Since re-balancing the forecasting data based on topics does not seem to affect calibration to a large degree, we find it highly plausible that both topics and difficulty (which cannot be disentangled here) do not meaningfully affect calibration.}

\subsection{Inverse probability of treatment weighting}\label{sec: IPTW}

We address the concern that differences between forecasts are driven by systematic differences in topics through weighting forecasts by topics in order to create an artificially balanced sample, which is a way of controlling for the differences in topics.
This method---called \textit{inverse probability of treatment weighting} (IPTW)---effectively gives more weight to forecasts that are on likely-to-be time-varying topics (Geopolitics, Economy, ...) in the control group (time-fixed events) and more weight to forecasts that are on likely-to-be time-fixed topics (Elections, Sports, ...) in the treatment group (time-varying events), thus automatically balancing the sample \citep{austin_moving_2015}. 
We refer to Section~\ref{S-sec:Additional detail related to data and sample construction} of the Online Supplement for details. In other words, we create within-forecaster samples of time-varying event and time-fixed event predictions that are---on average---more comparable in terms of topics, if we weight the forecasts carefully.
The weights are calculated by using the probability to be treated, i.e., the probability that a certain forecast is on a time-varying event: 
\begin{align*}
    \text{Weight}_\text{tvarying}  & = \frac{1}{\P(tvarying=1 \mid \mathbf{topics})}, \\
    \text{Weight}_\text{tfixed}& =  \frac{1}{1 - \P(tvarying=1 \mid \mathbf{topics})},
\end{align*}
where $tvarying$ is an indicator variable that equals $1$ for a time-varying event and $0$ otherwise.
We estimate the probability $\P(tvarying=1 \mid \mathbf{topics})$ via the standard logistic regression
\begin{equation*}
\P(tvarying=1 \mid \mathbf{topics}) = \frac{1}{1 + e^{- \left( \beta_0 + \bm{\beta}^\top \mathbf{topics} \right)}},
\end{equation*}
where $\mathbf{topics}$ is a vector of 80 observable covariates (e.g., Politics).\footnote{Details regarding the covariates and coefficients can be found in the Online Supplement (Section~\ref{S-Additional details related to the Inverse Probability of Treatment Weighting}).}

IPTW is a variation of propensity score matching, which has been found to sometimes insufficiently address bias between treatment and control samples, largely because of lacking or improper control variables \citep{smith_reconciling_2001}. There is no reason to believe that we are faced with such an issue in this study 
because there is no complex causal interdependency between forecasts and the events that they are referring to---quite unlike most observational studies where propensity score matching is used. Therefore, controlling for additional covariates/topics is unlikely to be causing systematically wrong estimates. Furthermore, we have a very rich set of covariates/topics that we control for and we do not engage in propensity score \textit{matching} but simply balance our entire sample on topics. 
We do not find evidence that the weighting of forecasts meaningfully affects the outcome of the analysis. This result is discussed in the next section.

\subsection{Regression analysis}\label{sec: Regression analysis}

Another cause for the differences in calibration in Figure~\ref{fig:calibration} could be that forecasters who issue more time-varying event forecasts are systematically different from those that make more time-fixed event forecasts.
In this case, differences in aggregate calibration between the two groups of questions would actually reflect differences between individual forecasters. 

We can eliminate this concern by looking at the calibration within subjects, effectively controlling for a potential self-selection of forecasters to events. 
By using a within-subject design, we ask: What is the expected difference between forecasts on time-varying events and time-fixed events of an individual forecaster? We use the balanced sample created by IPTW (see Section~\ref{sec: IPTW}).

We collect forecasts from each forecaster $i$, which yields $k=202$ panels of forecasts. 
The data structure is illustrated in Table~\ref{tab:example_data_structure}. We denote a probabilistic forecast by $G\in[0,1]$. We then measure the differences in calibration between time-fixed event forecasts and time-varying event forecasts \textit{within} individual forecasters and combine these differences \textit{across} forecasters in a random-effects meta-analysis using the \texttt{R} package \texttt{metafor}'s function \texttt{rma()} \citep{borenstein_basic_2010,viechtbauer_conducting_2010}.
Thus, we arrive at an average ``incentive treatment'' effect and can test whether it is significantly different from zero. 

\begin{table}[t!]
    \centering
    \begin{threeparttable}
    \caption{Data structure}\label{tab:example_data_structure}
            \begin{tabular}{ccccccc}
                \toprule
                Prediction ID & Forecaster ID & Forecast $G$ & Outcome $X$ & IPTW & tvarying & [topics] \\
                \midrule
                1  & 10844 & 0.24  & 0  & 1.133 & 1 & ... \\
                2  & 10844 & 0.3  & 1  & 1.495 & 1 & ... \\
                3  & 10844 & 0.69  & 1  & 1.982 & 0 & ... \\
                4  & 10844 & 0.59  & 1  & 1.815 & 0 & ... \\
								$\vdots$  & $\vdots$ & $\vdots$  & $\vdots$  & $\vdots$ & $\vdots$ & $\ddots$ \\
                \bottomrule
            \end{tabular}
    \begin{tablenotes}
      \small
      \item \textit{Note.} We have 202 panel members (number of forecasters eligible for the study) of the kind depicted in this table. Each panel contains all predictions made by a single forecaster. The variable 'IPTW' is the weight calculated from propensity scores. 
    \end{tablenotes}
    \end{threeparttable}
\end{table}

For each forecaster $i$, we estimate a calibration curve $\mu_{i}(G) := \P(X=1 \mid i,G)$, i.e., $\mu_{i}$ is forecaster $i$'s observed event rate among predictions of value $G$. We obtain $\mu_{i}$ by binning forecaster $i$'s predictions at the 0.01-level. We estimate the calibration $\mu_{i}$ as

\begin{equation}\label{eq:regression}
 \mu_{i} = \underbrace{\beta_0^{(i)}+\beta_1^{(i)} \ G}_{\text{time-fixed events}}  + \underbrace{\beta_2^{(i)} \ tvarying + \beta_3^{(i)} \ tvarying \times G}_{\text{time-varying events}}  + \varepsilon_{i},\qquad\E[\varepsilon_{i}\mid G,\ tvarying]=0,
\end{equation}

where $\beta_0$ denotes the intercept, $\beta_1$ the slope for the time-fixed event calibration, and $\beta_2$ and $\beta_3$ measure the change in intercept and slope for time-varying events.\footnote{As the data may be heteroskedastic, we use heteroskedasticity-robust standard errors from \citet{mackinnon_heteroskedasticity-consistent_1985}.}
Although calibration must not be linear \textit{per se}, perfect calibration, where reported probabilities correspond to observed frequencies, is linear; see also Figure~\ref{fig:Chloe_calibration}.\footnote{The calibration of misreported predictions in Section~\ref{sec:Binary_forecasts} is also linear. However, this depends on an arguably arbitrary modeling choice.} Thus, we use linear regression as the most valid available approach.
We practically estimate two linear functions---one for time-varying event forecasts and one for time-fixed event forecasts. However, by using time-varying events as a treatment dummy we incorporate both into one model, as in \eqref{eq:regression}.

We test that individual forecasters are \textit{not} systematically misreporting their expectations of time-varying events. 
This is equivalent to the variable $tvarying$, that signals whether the event is time-varying, having no predictive power, i.e.,
\begin{align*}
    \mathcal{H}_0^{tvarying}\colon\quad \beta_2=0,\ \beta_3 = 0.
\end{align*}

To a lesser extent, we furthermore expect forecasters to be well-calibrated, i.e., their expectations to be accurate. Since this is in line with an intercept of 0 and a slope of 1, 
\begin{align*}
    \mathcal{H}_0^{calib}\colon\quad \beta_0=0,\ \beta_1 = 1.
\end{align*}

In order to limit the influence that a single event might have on the calibration of individual forecasters we restrict our analysis to forecasters who have made predictions on at least 40 \textit{different} time-fixed and time-varying events, and at least a total of 80 predictions on time-varying and time-fixed events respectively.
Forecasters ($k=202$) can make multiple predictions on the same event, which is why we require both a sufficient number of predictions and a minimum number of events that these predictions refer to. Multiple predictions on the same event are common, as they reflect new information (updates). Within-event dependence arises as the event \textit{outcome} on updated predictions will be the same. Furthermore, the updated predictions could be serially correlated. We run a robustness check using only one forecast per event per forecaster (first or last).
It turns out that this is not an issue, as collapsing the data in this way does not meaningfully affect our result, and refer to the supplemental materials for details.
We obtain the mean estimates and standard errors of the four coefficients $\beta_0,\ldots,\beta_3$ using a standard random-effects model, and test the joint hypotheses $\mathcal{H}_0^{tvarying}$ and $\mathcal{H}_0^{calib}$ using the Bonferroni correction \citep{miller_simultaneous_1981}.\footnote{We use the \texttt{metafor} package in \texttt{R} and estimates of heterogeneity across forecasters as specified in \citet{dersimonian_meta-analysis_1986}. Graphical representations in the form of forestplots of the effects (Figures~\ref{S-FigForest1}--\ref{S-FigForest4}) can be found in the Online Supplement (Section~\ref{S-sec: Detail_regression}).}

\begin{table}[t!]
    \centering
    \begin{threeparttable}
    \caption{Regression Estimates}\label{tab:regrssn_outcomes}
    \begin{tabular}{lccc}
        \toprule
        & \textbf{(1)} &        \textbf{(2)}                    & \textbf{(3)}\\
        & \textbf{Aggregate} & \textbf{Within-subject} & \textbf{Within-subject \& IPTW} \\
        \midrule
         $\beta_0$  & $-0.0517$*** & $-0.0514$*** & $-0.0507$*** \\
                       & (0.0087) & (0.0055) & (0.0055) 
                       \vspace{10pt}\\
            $\beta_1$  & $1.0558$*** & $1.0671$*** & $1.0665$*** \\
                       & (0.0151) & (0.0106) & (0.0108) 
                       \vspace{10pt}\\
            $\beta_2$  & $-0.0279$* & $0.0101$* & $0.0095$ \\
                       & (0.0123) & (0.0048) & (0.0049) 
                       \vspace{10pt}\\
            $\beta_3$  & $-0.0948$*** & $-0.1309$*** & $-0.129$*** \\
                       & (0.0214) & (0.0122) & (0.0124) \\
        \bottomrule
    \end{tabular}
    \begin{tablenotes}[flushleft]
      \small
      \item \textit{Note:} * $p\leq0.05$; ** $p<0.01$; *** $p<0.001$. $p$-values correspond to individual hypothesis tests against zero for $\beta_0$,$\beta_2$,$\beta_3$, and against one for $\beta_1$.
      \item $\beta_0$ corresponds to the intercept. 
            $\beta_1$ ($G$) corresponds to the slope of the calibration curve for time-fixed event predictions. 
            $\beta_2$ ($tvarying$) corresponds to the \textit{change} in intercept for time-varying event predictions. $\beta_3$ ($tvarying \times G$) to the \textit{change} in slope of the calibration curve for time-varying event predictions.
    \end{tablenotes}
    \end{threeparttable}
\end{table}

Doing so, we reject both hypotheses $\mathcal{H}_0^{tvarying}$ and $\mathcal{H}_0^{calib}$ since $\beta_0$,$\beta_1$, and $\beta_3$ are significantly different from the hypothesized value; indeed, the Bonferroni-adjusted $p$-values are smaller than $0.001$ for both hypotheses. The detailed outcomes are reported in Table~\ref{tab:regrssn_outcomes}, where column 3 corresponds to the regression including controls for topics (IPTW) and forecaster self-selection (Within-subject).
To see how the added controls affect the result, we run a simple OLS regression on the aggregate predictions dataset, dropping both the within-subject control and topic balancing. This regression---which is essentially a fitted line in Figure~\ref{fig:calibration}---is in line with both theoretical expectations and the more carefully controlled estimates. We report estimated coefficients in column 1 of Table~\ref{tab:regrssn_outcomes}. Additionally, we repeat the regression analysis using only the within-subject specification but without balancing for topics. These results are reported in the second column in Table~\ref{tab:regrssn_outcomes}.

Forecasters are not perfectly calibrated 
because the intercept $\beta_0$ is smaller than 0 and the slope for $\beta_1$ is larger than 1. This suggests that forecasters skew their predictions toward 50\%, such that the calibration curve is slightly ``s-shaped''.\footnote{Such a ``center bias'' is commonly observed in forecasting data and is partially explainable by risk aversion \citep{hurley_experimental_2005,danz_belief_2022}.}
Nonetheless, the calibration of forecasters on time-fixed events is very good overall.

On the other hand, for time-varying events, the calibration of forecasters is significantly different, i.e., time preferences affect predictions to a significant degree. 
Forecasters report systematically higher predictions when the event is time-varying, as the difference in slope is negative ($\beta_3 <0$). A lower coefficient refers to a lower frequency \textit{ceteris paribus}. Thus, a lower coefficient implies a higher prediction for any given frequency. Our analysis finds that $\beta_2 > 0$ (although the effect is not statistically significant), which means that the intercept difference between control and treatment is positive (though small).

We can combine these observations by looking at all coefficients and (model-implied) calibration in Table~\ref{tab:Expected_prediction_frequency}, which reports the expected prediction $\E[G \mid tvarying, \mu]$ implied by \eqref{eq:regression} for a given true event probability $\mu \in (0,1)$, for time-fixed and time-varying events, i.e., the regression inverted to solve for $G$. Predictions on time-varying events \textit{are} higher than predictions on time-fixed events because the difference in slope $\beta_3$ is far larger than---and completely dominates---$\beta_2$, which is close to zero. Why is $\beta_2$ still positive? We conjecture that $\beta_2$ is positive because $\beta_0$ is already large and negative for reasons unrelated to time preferences, such as all predictions involving low-probability judgment being too high. Thus, $\beta_0$ might be masking potential differences between intercepts ($\beta_2$).

We see that the extent to which forecasters misreport their own expectation is quite large.
Table~\ref{tab:Expected_prediction_frequency} shows that the expected prediction from an average forecaster for a time-fixed event which occurs with 30\% probability would be 32.9\%, whilst the expected prediction for an equally likely time-varying event would be 36.4\%. On time-varying events that are expected to occur with 70\% probability (frequency and fixed-event-prediction), forecasters reported a whopping 79\% probability on average. Table~\ref{tab:Expected_prediction_frequency} reports that predictions on time-varying events that occurred with 90\% frequency are at an impossible 100.4\%. From Figure~\ref{fig:calibration} we can see that the observed frequency of time-varying events is never above 90\%, such that it would indeed take a larger-than-100\% prediction to get to a frequency of over 90\% for time-varying events.

\begin{table}[t!]
    \centering
    \caption{Expected prediction given observed frequencies}
    \begin{tabular}{cccccccccc}
			\toprule
       $\mu$  & 0.1 & 0.2 & 0.3 & 0.4 & 0.5 & 0.6 & 0.7 & 0.8 & 0.9 \\
       \midrule
       $\E[G \mid tvarying=0, \mu]$  & 0.141 & 0.235 & 0.329 & 0.422 & 0.516 & 0.610 & 0.704 & 0.798 & 0.891\\
       $\E[G \mid tvarying=1, \mu]$ & 0.151 & 0.257 & 0.364 & 0.471 & 0.577 & 0.684 & 0.791 & 0.897 & 1.004 \\
			\bottomrule
    \end{tabular}
    \begin{tablenotes}[center]
      \small
      \item \textit{Note:} Expected predictions implied by fitted model from \eqref{eq:regression}.
    \end{tablenotes}
    \label{tab:Expected_prediction_frequency}
\end{table}

We find no pattern in how much forecasters misreport their predictions. Moderate misreporting seems to be common across forecasters and not strongly correlated with how many predictions a forecaster made. 
We plot the estimates for $\beta_3$ for the most active 39 forecasters in the forestplot in Figure~\ref{fig:top_of_forestplot}. The farther the estimates of $\beta_3$ are from the center, the stronger the difference in slope between calibration on time-fixed and time-varying events. 
The individual forecasters are sorted from most predictions made (top) to least predictions made. The bottom entry marks the average estimate.

\begin{figure}
    \centering
    \includegraphics[width=0.6\textwidth]{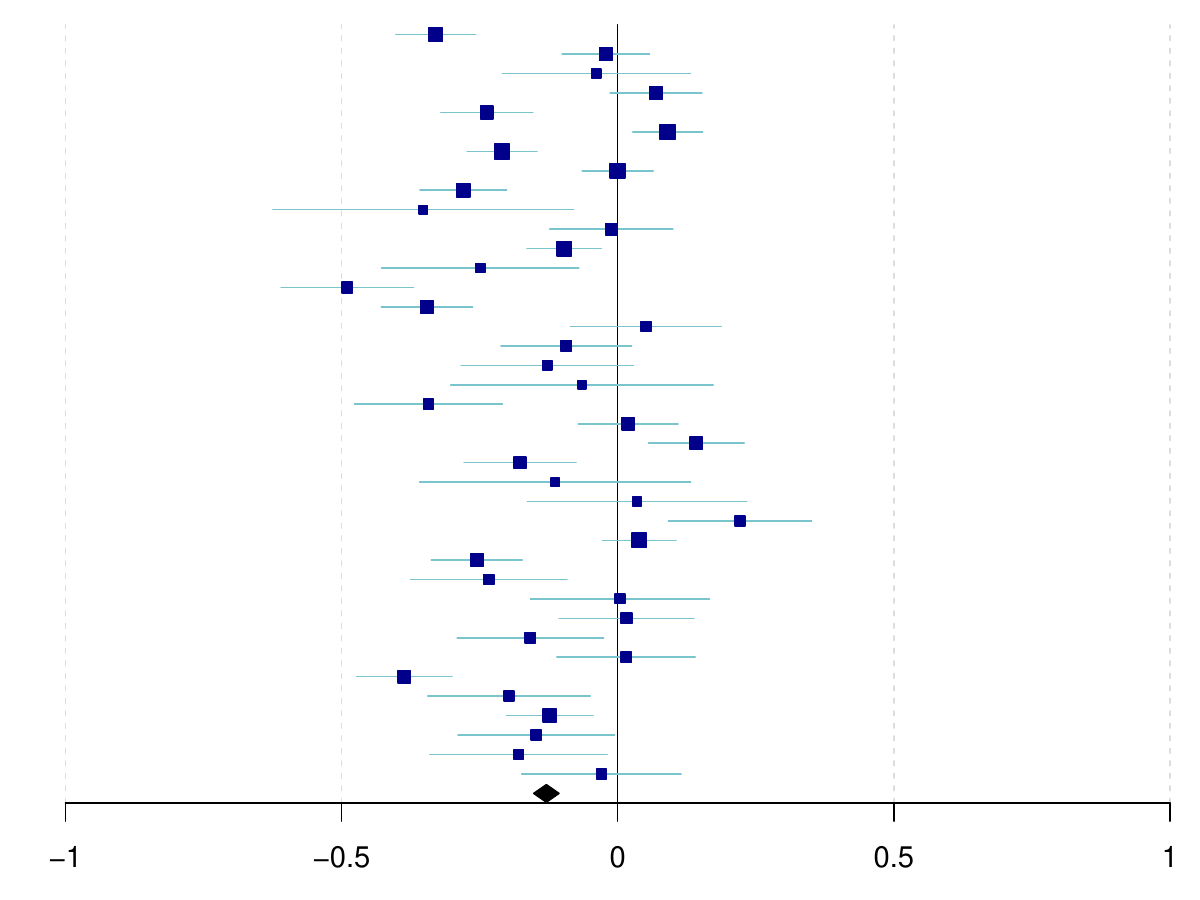}
    \caption{Estimates of $\beta_3$ for the 39 most active forecasters. }\label{fig:top_of_forestplot}
            \vspace{1ex}
    \parbox{0.65\textwidth}{%
      \small
      \textit{Note.} This figure presents the forestplot of 39 individual within-forecaster estimates of the coefficient $\beta_3$. The forecasters are sorted (top to bottom) from most predictions made (total) to least predictions made. The horizontal axis refers to the coefficient's size. Whiskers indicate the 95\%-confidence intervals. The mean estimate for all forecasters is at the bottom.
    }
\end{figure}

Finally, we are interested in how time preferences impact the accuracy of forecasts. The effect on accuracy is negative
because a strictly worse calibration will \textit{ceteris paribus} lead to lower accuracy. Since any strictly proper scoring rule penalizes systematic bias (calibration) and precision (how close forecasts are to true outcomes) separately \citep{GR07}, we can isolate the 
contribution of systematic bias to the accuracy, as measured by the mean quadratic scoring rule or Brier score in our dataset. We find that the predictions on time-fixed events have an average systematic bias that contributes $0.0023$ to the Brier score, whereas the predictions on time-varying events have an average systematic bias that contributes roughly $0.011$ to the Brier score. Setting the calibration of time-varying-event predictions equal to those of time-fixed-event predictions would have improved accuracy by more than 3.5\%.\footnote{We refer to Section~\ref{S-sec:Effect_on_accuracy} of the Online Supplement for more details.} This is just another way of saying that time-varying event predictions are much less well calibrated.

\section{Discussion}\label{sec:Discussion}

This study provides strong evidence that time preferences impact real-world forecasts.
We bring forward intuitive, theoretical, and empirical accounts, all of which provide evidence of their own. 
Whilst our empirical study is observational---limiting our ability to assure equality of treatment and control---the study has the huge advantage of being conducted in the real world with actual forecasts on future events spanning multiple years. Therefore, our empirical study possesses strong external validity. Furthermore, we find time preference effects in all model specifications; see Table~\ref{tab:regrssn_outcomes}. This leads us to believe that differences between treatment and control, which are minimized using our IPTW and the within-subject design, do not strongly confound the measured time preference effects. 

We remark that additional factors need to be considered when interpreting the results. 
A cognitive bias that is closely associated with time-varying events cannot be distinguished from an effect that is caused by human preferences.\footnote{We thank an attendant of our talks for this remark.}
However, the data suggests that the observed effect is caused by time preferences.  
The reason for this lies in the rule of complement: We can take any event and ask for the complement, i.e., the chance that it will not happen. Theoretically, as the predictions on events that can happen early are inflated, the complement would have to be deflated, i.e., predictions should be systematically too low. We indeed see this pattern in a limited set of events on Metaculus where the questions are formulated to ask for the complement.\footnote{We refer to Figure~\ref{S-fig:calibration_dynamic_-} in the Online Supplement. We also discuss how follow-up studies could collect evidence on either hypothesis.} It seems implausible that a cognitive bias would cause distortions that are sensitive to the rule of complements.

In fact, we suspect that our empirical study far underestimates the degree to which predictions would be affected had forecasters minimized their expected discounted error.
Firstly, we know from the literature on belief elicitation that humans have a strong preference for being seen as honest and actually being honest \citep{abeler_preferences_2019}. Therefore, humans seem to strike a balance between honesty and reward-maximization when they can gain from misreporting their beliefs. Forecasters may be distorting their forecasts slightly but not drastically, even if they would benefit from doing so.
Secondly, forecasters may be (partially) unaware of the scoring and the opportunity to gain from misreporting.
Furthermore, our sample contains mainly short-term forecasts. Although a handful of forecasts is as much as 7-years-ahead, the vast majority of forecasts is less than 1-year- ahead (median: $118$ days, mean: $227$ days). Since the issue of time preferences would increase with the expected time-to-event, we expect that long-term forecasts suffer more heavily from time preference effects than short-term forecasts. Most long-term forecasts from Metaculus are not in our dataset because the events are not yet verified.

A natural question arises: How large is this distortion in practice? The magnitude of the distortion should depend on both the discount rate $r$ and the expected time-to-event. The longer the time-horizon over which events are expected to unfold, the larger the distortion should be. However, estimating the size of the distortion as a function of time-to-event (or estimating $r$) is less straightforward than it may seem. In our empirical study, simply conditioning on the observed time-to-event introduces a surprisingly-early bias, a form of selection bias: events with a very short observed time-to-event are, almost by definition, events that occurred sooner than expected. In such cases, we should not expect observed frequencies to match forecasted probabilities, even if those forecasts were perfectly calibrated.

Overall, we find that time preference effects are a serious problem that can systematically affect forecasts in all areas, and that forecasting practitioners should be aware of. 
Moreover, this problem extends towards all statements regarding the future which can be ``right'' earlier than they can be ``wrong''.
Failures and disruptions are more likely to occur---and become apparent---in the short-term than sustained, smooth operations or simply ``business as usual''. Therefore, time preferences should lead to overestimation of risks in various areas, such as
credit risk, life expectancy estimation, natural disaster forecasting, reliability engineering, technological progress forecasting, pandemic forecasts, geopolitical analysis, environmental risk, and information security.\footnote{It is commonly understood that experts may overestimate risks in order to err on the side of caution. This is often (mistakenly) called risk aversion or conservatism. However, \textit{if} we wanted to err on the side of caution, being systematically misinformed about risks is not necessary. Risk-averse actors can simply \textit{choose} to take the safe route based on unbiased estimates of risk. It seems far more plausible that experts overestimate risks because it is better for \textit{them}, not only because of time preferences, but also because unexpected failures are certainly penalized more harshly than an undue overestimate of that risk. Thus, experts may overestimate risks in order to protect their reputation, or avoid blame.}
As a result, we should expect that decision-makers and regulators apply standards in systematically suboptimal ways, over- or underemphasizing safety and preparedness. In particular, forecasts for technological milestones seem to be overly optimistic \citep{tichy_over-optimism_2004}, which could---at least partially---be explained by time preference effects.

As \citet{ottaviani_strategy_2006} put it: ``Forecasting is proving to be an apt laboratory for improving our understanding of
strategic communication and positioning by non-partisan informed agents. The availability
of data sets and the richness of institutional details can inspire and give discipline to our
theorizing. The insights gained can be helpful in shedding light on a number of other social
and economic problems [...]''.
We remark that time preferences do not necessarily lead to a decrease in the accuracy of forecasts---as they do in the Metaculus tournament---if they happen to cancel out or mitigate other existing biases. 
However, in most cases we expect time preferences to be unwanted and potentially harmful. 

The remaining question is how to reduce or eliminate them.
We identify four different strategies to eliminate or cope with time preference effects and investigate each in turn. However, there is no solution that comes without drawbacks.

\begin{enumerate}
    \item Increase future rewards proportionally to offset discounting

We can offset the discounting of future rewards by increasing rewards proportionally over time, i.e., multiply \eqref{eq:dCRPS} by $e^{rt}$.
However, this may be difficult in practice because it (i) requires knowledge of the forecasters discount rate $r$ and (ii) greatly complicates the nominal calculation of payoffs or reputational tokens to the forecaster.

    \item Substitute time-varying events with time-fixed events

An obvious fix would be to just wait until some defined threshold of time $\tau$ when evaluating, and---more importantly---rewarding forecasts.
Instead of asking ``When will China invade Taiwan?'' we can ask ``Will China have invaded Taiwan on the 1st of January?'' \textit{and} wait with the reward until the 1st of January.
There are multiple problems with this approach, the first is determining $\tau$. For many areas of interest, e.g., earthquakes and climate change, we do not want $\tau$ to be so far away in the future that forecasters do not care about evaluation. However, we would like to be able to elicit long-term predictions. We could forecast multiple of such time-fixed events, varying $\tau$ each time. This approach is equivalent to approximating a probability density distribution (that is, the forecast $G$ on the time-varying event) by a discrete histogram (which is composed of the forecasts on time-fixed events). In the limit, with infinitely many time-fixed event forecasts (one corresponding to each point in time), the two will converge. This would be equivalent to continuously evaluating a forecast.
Undoubtedly, continuously evaluating forecasts is difficult in practice. The approach of substituting time-varying events with time-fixed events eliminates time preference effects because evaluation is occurring in one point in time only. However, it also discards valuable information as a probability density distribution contains more information than a histogram.
A second problem is that forecast evaluation is not always formal. For example, imagine a public intellectual sharing a prediction on social media, or a manager sharing a prediction with their peer group. If that prediction turns out to have been accurate in the short-term, it will be perceived as insightful, immediately boosting their reputation, although no formal scoring occurs. 
Notwithstanding its drawbacks, we believe that this approach---avoiding time preferences altogether---is the best option for most areas of forecasting because of its simplicity.
    
    \item Offer no rewards to forecasters or do not assign an error to forecasts 

Since incentives are the culprit that lead to time preference effects, we can eliminate time preference effects by eliminating incentives related to forecasts.
However, this is both undesirable and difficult.
This approach is undesirable because we would like to evaluate forecasts to ensure that they are accurate. It is difficult to eliminate incentives because incentives are not only payments made to the forecaster. In our empirical study, reputational incentives---which are less tangible and harder to eliminate---are sufficient to give rise to strong time preference effects.

    \item Correct predictions \textit{ex post} by the estimated time preference effect 

Given that it is easy to identify time-varying event forecasts, and all of them are potentially subject to time preference effects, we can try to reverse-engineer the actual expectation from the reported forecast. However, this requires knowledge of exactly how forecasters misreport.

\end{enumerate}

This enumeration of possible solutions suggests
that forecasts of time-varying events are inherently challenging to evaluate and incentivize. Any attempt to do so risks creating misaligned incentives.
Given the importance of foresight related to future events, this is an important finding of its own. 

The reason for its unsolvability is obvious once we understand time preference effects (as defined in this study) as the less-well-known cousin of the general problem of long-term forecasting: it is difficult to incentivize long-term predictions given that their evaluation is far away.
Since there is no easy ``solution'' to this problem, it follows that time-varying event predictions---even if relatively short-term---are also affected.

\section{Conclusion}\label{sec:Conclusion}

This study investigates the incentive to report earlier occurrence of events in order to increase (perceived) short-term benefits.
If a forecasted event can occur earlier than other potential outcomes, forecasters can ``bet'' on this early resolution and can only be ``correct'' in the short term. As a result of such distortions, overall forecasting performance decreases. Utilizing forecasting data from Metaculus, we observe a substantial and significant inflation of predictions when the forecasted event can occur early.
Therefore, we have reason to believe that forecasts in important domains such as technological forecasting, extreme weather forecasting, pandemic forecasting, geopolitical and financial risk may be systematically biased. We also find that there is limited practical leeway for aligning incentives with honest reporting, such that awareness related to potentially misaligned incentives is critically important.
Future research in this area could involve studying time preference effects in applied areas. 
Furthermore, conducting a controlled experiment could deepen our understanding and provide valuable evidence in addition to this study. Such an experiment would allow us to obtain various auxiliary measurements, including the implied discount factor $r$.

\singlespacing
\small

\section*{Disclosure Statement}

The authors report there are no competing interests to declare.

\section*{Data Availability Statement}

The code used to generate the results and figures in this paper is publicly available in a GitHub repository at \nolinkurl{https://github.com/PfadQualle/time_preferences_in_forecasting}. This repository includes all scripts necessary to reproduce the analysis, conditional on access to the underlying data.
The forecasting data used in this study were obtained from Metaculus and are subject to restrictions that prevent redistribution. Researchers seeking access to these data should contact Metaculus directly. 
Upon reasonable request, and where permitted by the terms of our data use agreement with Metaculus, the corresponding author will be happy to provide guidance on replicating and extending the empirical analysis in this study.

\bibliographystyle{apalike}
\bibliography{thebib}

\end{document}

% --- supplement: SM1.tex ---

\baselineskip18pt
\renewcommand\floatpagefraction{.9}
\renewcommand\topfraction{.9}
\renewcommand\bottomfraction{.9}
\renewcommand\textfraction{.1}
\setcounter{totalnumber}{50}
\setcounter{topnumber}{50}
\setcounter{bottomnumber}{50}
\abovedisplayskip1.5ex plus1ex minus1ex
\belowdisplayskip1.5ex plus1ex minus1ex
\abovedisplayshortskip1.5ex plus1ex minus1ex
\belowdisplayshortskip1.5ex plus1ex minus1ex

\title{
Supplementary materials to ``Time preference effects in forecasting''
}

\author{
	Yannick Hoga\thanks{Faculty of Economics and Business Administration, University of Duisburg-Essen, Universit\"atsstra\ss e 12, D--45117 Essen, Germany, \href{mailto:yannick.hoga@vwl.uni-due.de}{yannick.hoga@vwl.uni-due.de}.}
	\and 
		Niklas V. Lehmann\thanks{Faculty of Business Administration, TU Bergakademie Freiberg, Schlossplatz 1, D--09599 Freiberg, Germany, \href{mailto:Niklas-Valentin.Lehmann@vwl.tu-freiberg.de}{Niklas-Valentin.Lehmann@vwl.tu-freiberg.de}.}
}

\date{\today}
\maketitle

\begin{abstract}
	\noindent 
	Throughout, references starting with an ``S.'' refer to this supplement, whereas references without this prefix refer to the main paper.
	This supplement contains Section~\ref{sec:Proofs}, which contains the proofs of our technical results.
	Then, Section~\ref{sec:Methodology} provides additional details on the empirical analysis.
	Section~\ref{sec:Robustness checks} carries out several robustness checks.
	In Section~\ref{sec:Effect_on_accuracy}, we calculate the effect of time preferences on accuracy.
	Finally, Section~\ref{sec:Forestplots for the Random-Effects Meta Analysis} collects forestplots for the random-effects meta analysis.
\end{abstract}

\doublespacing

\section{Proofs}\label{sec:Proofs}

\subsection{Proof of Theorem~\ref{M-thm:main}}\label{sec:proof thm}

\begin{proof}[\textbf{Proof of Theorem~\ref{M-thm:main}:}]
The expected discounted CRPS for some issued distributional forecast $G$ is, by the law of total probability,
\begin{align}
	\E\big[\CRPS_d(G, T)\big] &= \E\Big[e^{-rT}\int_{0}^{\tau} \big(G(t) - \1_{\{t\geq T\}}\big)^2\D t\Big] \notag\\
	&=\P\{T<\tau\}\E\Big[e^{-rT}\int_{0}^{\tau} \big(G(t) - \1_{\{t\geq T\}}\big)^2\D t\mid T<\tau\Big] \notag\\
	&\hspace{4cm} +\P\{T=\tau\}\E\Big[e^{-rT}\int_{0}^{\tau} \big(G(t) - \1_{\{t\geq T\}}\big)^2\D t\mid T=\tau\Big] \notag\\
	&=\int_{0}^{\tau}\E\Big[e^{-rT}\big(G(t) - \1_{\{T\leq t\}}\big)^2 \mid T< \tau\Big]\D t\times \underbrace{F(\tau-)}_{\text{event during }[0,\tau)} \notag\\
	&\hspace{4cm} + e^{-r\tau}\int_{0}^{\tau}G^2(t)\D t\times \underbrace{\big[1-F(\tau-)\big]}_{\text{no event during }[0,\tau)}\notag\\
	& = \int_{0}^{\tau}\E\Big[\big(G(t) - \1_{\{T\leq t\}}\big)^2 e^{-rT}\mid T< \tau\Big]F(\tau-)\notag \\
	&\hspace{4cm} + G^2(t) \big[1-F(\tau-)\big]e^{-r\tau}\D t \notag\\
	&= \int_{0}^{\tau}\E\Big[\big(G^2(t) - 2G(t)\1_{\{T\leq t\}} + \1_{\{T\leq t\}}\big) e^{-rT}\mid T< \tau\Big]F(\tau-) \notag\\
	&\hspace{4cm}+ G^2(t) \big[1-F(\tau-)\big]e^{-r\tau}\D t \notag\\
	&= \int_{0}^{\tau}G^2(t)\E\big[e^{-rT}\mid T< \tau\big]F(\tau-) - 2G(t)\E\big[\1_{\{T\leq t\}}e^{-rT}\mid T< \tau\big]F(\tau-) \notag\\
	&\hspace{2cm}+ \E\big[\1_{\{T\leq t\}}e^{-rT}\mid T< \tau\big]F(\tau-) + G^2(t) \big[1-F(\tau-)\big]e^{-r\tau}\D t.\label{eq:(tbm)}
\end{align}
Invoke standard results from the calculus of variations and define the above integrand to be the Lagrange function, i.e., 
\begin{multline}\label{eq:Lagrange}
	\mathcal{L}\big(t, G(t)\big) := G^2(t)\Big\{\E\big[e^{-rT}\mid T< \tau\big]F(\tau-) + \big[1-F(\tau-)\big]e^{-r\tau}\Big\} \\
	- 2G(t)\E\big[\1_{\{T\leq t\}}e^{-rT}\mid T< \tau\big]F(\tau-) + \E\big[\1_{\{T\leq t\}}e^{-rT}\mid T< \tau\big]F(\tau-).
\end{multline}
Specifically, we apply the Euler--Lagrange calculus as set out by \citet{Dac08}; see, in particular, the final part of his Theorem~4.12 and Example~4.15~(ii).\footnote{Observe for the application of \citeauthor{Dac08}'s \citeyearpar{Dac08} Theorem~4.12 that his $f(x,u,\xi)$ corresponds to our Lagrange function $\mathcal{L}(t,G(t))$, with $t$ playing the role of $x$, and $G(t)$ that of $\xi$.}
Note that $G\mapsto\mathcal{L}(t,G)$ is convex for every $t\in(0,\tau)$.
Also, condition (H3) of \citet{Dac08} is satisfied because, for some sufficiently large $K>0$ that may change from line to line,
\begin{align*}
	\big|\mathcal{L}(t,G)\big| & \leq G^2 K + 2G K + K \leq K + K G^2,\\
	\bigg|\frac{\partial}{\partial G}\mathcal{L}(t,G)\bigg| & \leq 2 G K + 2 K \leq K + K G,
\end{align*}
where we used that $\E\big[e^{-rT}\mid T< \tau\big]$ is bounded.
Hence, the aforementioned results in \citet{Dac08} (with, in his notation, $p=2$) imply that the minimizer $G(\cdot)$ of \eqref{eq:(tbm)} satisfies the Euler--Lagrange equation
\begin{align*}
	\frac{\partial \mathcal{L}(t,G)}{\partial G} &= 2G(t)\Big\{\E\big[e^{-rT}\mid T< \tau\big]F(\tau-) + \big[1-F(\tau-)\big]e^{-r\tau}\Big\} - 2\E\big[\1_{\{T\leq t\}}e^{-rT}\mid T< \tau\big]F(\tau-) \\
	& =c
\end{align*}
for some $c\in\mathbb{R}$.
Therefore, 
\begin{equation}\label{eq:(14-)}
	G(t) =\frac{c/2+\E\big[\1_{\{T\leq t\}}e^{-rT}\mid T< \tau\big]F(\tau-)}{\E\big[e^{-rT}\mid T< \tau\big]F(\tau-) + \big[1-F(\tau-)\big]e^{-r\tau}}.
\end{equation}
Since $T\in[0,\tau]$ it has to hold by \eqref{eq:(14-)} that
\begin{align*}
	1&= G(\tau) \\
	&= \frac{\E\big[\1_{\{T\leq\tau\}}e^{-rT}\mid T< \tau\big]+c/2}{F(\tau-)\E\big[e^{-rT}\mid T< \tau\big] + \big[1-F(\tau-)\big]e^{-r\tau}}\\
	&= 1 + \frac{c/2}{F(\tau-)\E\big[e^{-rT}\mid T< \tau\big] + \big[1-F(\tau-)\big]e^{-r\tau}},
\end{align*}
it necessarily has to be the case that $c/2=\big[1-F(\tau-)\big]e^{-r\tau}$.
Therefore, putting $c/2=\big[1-F(\tau-)\big]e^{-r\tau}$ in \eqref{eq:(14-)} gives by the law of total expectation that
\begin{align}
	G(t) &= \frac{\big[1-F(\tau-)\big]e^{-r\tau} + \E\big[\1_{\{T\leq t\}}e^{-rT}\mid T< \tau\big]F(\tau-)}{\big[1-F(\tau-)\big]e^{-r\tau} + \E\big[e^{-rT}\mid T< \tau\big]F(\tau-)}\notag\\
	&= \frac{\E\big[\1_{\{T\leq t\}}e^{-rT}\big]}{\E\big[e^{-rT}\big]}\notag\\
	&=F(t)\frac{\E\big[\1_{\{T\leq\tau\}}e^{-rT}\big]}{\E\big[e^{-rT}\big]}.\label{eq:(def G)}
\end{align}
It is obvious that $G$ is a proper cdf.

We now demonstrate that $G$ is the unique minimizer of \eqref{eq:(tbm)}.
To do so, we invoke Theorem~4.1 of \citet{Dac08}, where his $f(x,u,\xi)$ again corresponds to our Lagrange function $\mathcal{L}(t,G)$.
Write
\[
	\mathcal{L}(t,G) = aG^2 -2bG + c 
\]
for implicitly defined positive $a,b,c$; cf.~\eqref{eq:Lagrange}.
Condition (H1) of Theorem~4.1 is satisfied by strict convexity of $G\mapsto\mathcal{L}(t,G)$, since
\[
	\frac{\partial^2\mathcal{L}(t,G)}{(\partial G)^2}= 2a>0.
\]
To check (H2) of \citet[Theorem~4.1]{Dac08}, it follows from straightforward computations that one can find $\alpha>0$ and $\gamma\in\mathbb{R}$ (e.g., $\alpha=a/2$ and $\gamma=c-2b^2/a$) such that
\[
	\mathcal{L}(t,G) = aG^2 -2bG + c \geq \alpha G^2 + \gamma.
\]
Then, (H2) is immediate.
By Theorem~4.1 in combination with Remark~4.2~(i) of \citet{Dac08}, uniqueness of the solution now follows as $G\mapsto \mathcal{L}(t,G)$ is strictly convex.

Finally, we prove $G(t)\geq F(t)$.
By \eqref{eq:(def G)}, $G(t)\geq F(t)$ is evidently satisfied because $t\mapsto\E\big[e^{-rT}\mid T\leq t\big]$ is non-decreasing, and $\E\big[e^{-rT}\mid T\leq \tau\big]=\E\big[e^{-rT}\big]$ (since $T\in[0,\tau]$).
\end{proof}

\subsection{Proof of Equation~\eqref{M-eq:G r inf}}\label{sec:proof eqn}

For the proof of equation~\eqref{M-eq:G r inf}, the following lemma will be useful.

\begin{lem}\label{lem:exp ratio limit}
Let $Y\geq0$ satisfy that $\essinf Y=0$.
Then, for every $\varepsilon>0$,
\[
	\lim_{r\to\infty}\frac{\E\big[e^{-rY}\1_{\{Y>\varepsilon\}}\big]}{\E\big[e^{-rY}\big]}=0.
\]
\end{lem}

\begin{proof}
Let $\varepsilon>0$.
Since $e^{-rY}\leq e^{-r\varepsilon}$ on the set $\{Y>\varepsilon\}$, it holds that
\begin{equation}\label{eq:E1}
	\E\big[e^{-rY}\1_{\{Y>\varepsilon\}}\big]\leq e^{-r\varepsilon}\P\{Y>\varepsilon\}.
\end{equation}
Moreover, on the set $\{Y\leq \varepsilon/2\}$ we have that $e^{-rY}\geq e^{-r\varepsilon/2}$, such that
\begin{align}
	\E\big[e^{-rY}\big] & = \E\big[e^{-rY}\1_{\{Y\leq\varepsilon/2\}}\big] + \E\big[e^{-rY}\1_{\{Y>\varepsilon/2\}}\big]\notag\\
	&\geq \E\big[e^{-rY}\1_{\{Y\leq\varepsilon/2\}}\big]\notag\\	
	&=e^{-r\varepsilon/2}\P\{Y\leq \varepsilon/2\}.\label{eq:E2}
\end{align}
Note here that $\P\{Y\leq \varepsilon/2\}>0$ because $\essinf Y=0$. 
Combining \eqref{eq:E1}--\eqref{eq:E2}, we obtain that
\[
	0\leq \frac{\E\big[e^{-rY}\1_{\{Y>\varepsilon\}}\big]}{\E\big[e^{-rY}\big]}\leq\frac{\P\{Y> \varepsilon\}}{\P\{Y\leq \varepsilon/2\}}e^{-r\varepsilon/2}\longrightarrow0,
\]
as $r\to\infty$.
\end{proof}

\begin{proof}[\textbf{Proof of Equation~\eqref{M-eq:G r inf}:}]
Recall the definition $a=\essinf T$.
We study the limit of 
\[
	G(t)=\frac{\E\big[e^{-rT}\1_{\{T\leq t\}}\big]}{\E\big[e^{-rT}\big]}.
\]
To do so, we distinguish the three cases $t<a$, $t>a$ and $t=a$.

\textbf{1st case:} $t<a$

In this case, $\1_{\{T\leq t\}}=0$ almost surely, such that $G(t)=0$ and, hence, $\lim_{r\to\infty}G(t)=0$ follows trivially.

\textbf{2nd case:} $t>a$

Put $Y:=T-a$, whence $Y$ satisfies the assumptions of Lemma~\ref{lem:exp ratio limit}.
Choose $\varepsilon>0$ such that $a+\varepsilon<t$.
Therefore, $\{Y\leq\varepsilon\}\subset\{T\leq t\}$.
Then,
\begin{align*}
	\E\big[e^{-rY}\1_{\{T\leq t\}}\big]&=\E\big[e^{-rY}\big] - \E\big[e^{-rY}\1_{\{T> t\}}\big]\\
	&\geq \E\big[e^{-rY}\big] - \E\big[e^{-rY}\1_{\{Y> \varepsilon\}}\big].
\end{align*}
Dividing by $\E\big[e^{-rY}\big]$ gives that
\[
	G(t)=\frac{\E\big[e^{-rY}\1_{\{T\leq t\}}\big]}{\E\big[e^{-rY}\big]}\geq 1-\frac{\E\big[e^{-rY}\1_{\{Y> \varepsilon\}}\big]}{\E\big[e^{-rY}\big]}.
\]
By Lemma~\ref{lem:exp ratio limit}, the second term on the right-hand side tends to 0, as $r\to\infty$.
Since, moreover, $0\leq G(t)\leq 1$, we conclude that $\lim_{r\to\infty}G(t)=1$.

\textbf{3rd case:} $t=a$

Here, we further distinguish two subcases.

\textbf{(i)} $\P\{T=a\}=0$

Then, for every $\varepsilon>0$,
\[
	\{T\leq a\}\subset\{Y\leq 0\}\subset\{Y\leq\varepsilon\},
\]
and Lemma~\ref{lem:exp ratio limit} implies
\[
	0\leq\frac{\E\big[e^{-rT}\1_{\{T\leq a\}}\big]}{\E\big[e^{-rT}\big]}\leq \frac{\E\big[e^{-rY}\1_{\{Y\leq \varepsilon\}}\big]}{\E\big[e^{-rY}\big]}\longrightarrow0,
\]
as $r\to\infty$.
Therefore, $G(t)=G(a)=0$.

\textbf{(ii)} $\P\{T=a\}>0$

Then,
\[
	\E\big[e^{-rT}\1_{\{T\leq a\}}\big] = e^{-ra}\P\{T=a\},
\]
such that
\begin{equation}\label{eq:one}
	\E\big[e^{-rT}\big] = \E\big[e^{-rT}\1_{\{T\leq a\}}\big] + \E\big[e^{-rT}\1_{\{T> a\}}\big]\geq e^{-ra}\P\{T=a\}.
\end{equation}
Moreover, as $r\to\infty$,
\begin{equation}\label{eq:two}
	\E\big[e^{-rT}\1_{\{T> a\}}\big] = o(e^{-ra}).
\end{equation}
To see this, note that, because $\E\big[e^{-rT}\1_{\{T> a\}}\big]=e^{-ra}\E\big[e^{-rY}\1_{\{Y> 0\}}\big]$, this is equivalent to
\[
	\E\big[e^{-rY}\1_{\{Y> 0\}}\big]=o(1).
\]
However, since $e^{-rY}\1_{\{Y> 0\}}\to0$, as $r\to\infty$, and $e^{-rY}\1_{\{Y> 0\}}\leq 1$, this follows immediately by dominated convergence.

Now, combining \eqref{eq:one}--\eqref{eq:two}, we obtain that
\begin{align*}
	G(t)=G(a)& =\frac{\E\big[e^{-rT}\1_{\{T\leq a\}}\big]}{\E\big[e^{-rT}\big]}\\
	&= \frac{\E\big[e^{-rT}\big] - \E\big[e^{-rT}\1_{\{T> a\}}\big]}{\E\big[e^{-rT}\big]} \\
	& = 1-\frac{\E\big[e^{-rT}\1_{\{T>a\}}\big]}{\E\big[e^{-rT}\big]}\\
	& \longrightarrow 1,
\end{align*}
as $r\to\infty$.
\end{proof}

\section{Methodology}\label{sec:Methodology}

\textbf{Intention:} Our primary goal in the empirical section is not to measure an effect size with precision but to investigate whether time preference effects, i.e., discount-driven misreports, are a problem in practice. To that end, we treat the naturally occurring mix of time-fixed and time-varying questions on Metaculus as a setting that is similar to a natural experiment and try to inch as close as possible to a quasi-experimental setup. This approach trades off some statistical ``cleanliness'' in favor of clearer identification of behavioural patterns. Since the effect size depends on numerous features, such as time-to-event and personal discount rates, we do not expect the measured effect size to be a useful approximation of the size of time preference effects in other settings. 

\textbf{Replicability:} To replicate this study, researchers need three resources: (i) the data-analysis code, (ii) the Metaculus forecasting data, and (iii) the relevant supplementary data from the Metaculus API. We address each in turn. First, we provide the data-analysis code in the following repository:
 
\begin{quote}
\url{https://github.com/PfadQualle/time_preferences_in_forecasting}
\end{quote}
 
Second, although the Metaculus API is publicly available, it has changed substantially since we wrote our \texttt{R} code in early 2025. As a result, our code may no longer run as intended with the current version of the API. To address this problem, we have included all relevant data retrieved from the API in the GitHub repository, i.e., accessing the Metaculus API is not needed.
 
Third, the Metaculus forecasting data was generously provided to us, but it is proprietary, and we are therefore unable to release it publicly. Researchers who wish to obtain it should contact Metaculus directly at \url{https://metaculus.com}.

\subsection{Additional detail related to data and sample construction}\label{sec:Additional detail related to data and sample construction}

\textbf{Absence of time-to-event measures:} Our analysis deliberately omits any time-to-event measure. Intuitively, one might wish to control for such a measure, since a forecaster’s discounting should depend on their expected time-to-event or, more specifically, their expected time-to-reward. However, we do not observe any forecaster-specific expectations of time-to-event. In the case of time-varying (unscheduled) questions, the relevant horizon is precisely the expert’s own (unobserved) expectation of when the event will happen. Because this expectation is not recorded in the Metaculus dataset, we cannot include a time-to-event covariate. We note this as a limitation and also caution against simply conditioning on the realized time-to-event as this will introduce a ``surprisingly early bias'' in the data.% \citep{lehmann_surprisingly-early_2025}. 
For example, suppose we were to split up all forecasts based on their time-to-event, say, we group forecasts with a time-to-event of less than 1 year in one set and forecasts with a time-to-event of more than 1 year in another set. 
Then, the first group would inevitably include a number of events that occurred sooner than honestly expected (surprisingly early), while the second group would, conversely, be populated by events whose outcomes remained unknown for longer than honestly expected (surprisingly late). This means that, even for a forecaster whose predictions are perfectly calibrated, calibration will not be perfect within each of these separate groups, but only when considering all forecasts together. By splitting the data in this manner, we inadvertently introduce a selection bias that confounds our effect of interest.

\textbf{Classification:} Since the distinction in time-varying and time-fixed events is new, and not made in the Metaculus data, we classified all binary events by hand using the ``resolution criteria'' provided by the question authors. This led to an unambiguous classification in most cases. We applied the label ``NA'' to ambiguous resolution criteria.

\textbf{Time window:} We used all binary questions kindly provided to us by Metaculus, from the inception of the platform in 2015 until our data cutoff in August 2024. As already noted in the paper, we needed to exclude a total of 35 events in order to eliminate the ``surprisingly early bias''. % We refer to \citet{lehmann_surprisingly-early_2025} for details. 

\textbf{Focus on binary questions:} We focus on binary questions in this analysis for multiple reasons. Firstly, we wanted to keep the analysis as simple as possible. Introducing distributional forecasts would have increased complexity. Secondly, distributional forecasts (on Metaculus) are more often on time-varying events. Thus, these events would not have had the properties that we seek in a quasi-experiment analysis, such as (rough) equality of treatment and control. 

\subsection{Additional details related to the regression analysis}\label{sec: Detail_regression}

\textbf{Linear model:} Some reasons for choosing a linear model are outlined in the main text. Although probabilistic forecasts lie between 0 and 1---suggesting a logit or probit model---we found that generalized linear models with a logistic form produced extreme overfits for many individual forecasters. The linear model, by contrast, fit the data points more ``sensibly''. We concede that this choice allows predictions outside [0,1] in principle, but in practice the fitted values remained in a reasonable range, and the linear model allows for an easier interpretation of the coefficients.

\textbf{Dependent variable:} One point that may cause confusion is our use of the actually observed event frequency $\mu$ as the dependent variable, rather than the reported prediction. This choice reflects the definition of calibration: we model the distribution $\mu\mid G,tvarying$ rather than $G\mid\mu,tvarying$. However, it seems to go ``against the direction of causality'', as reported predictions change as a result of time preferences (variable $tvarying$), not the actual frequency of events. However, since we estimate how often events occur among all forecasts assigned a given probability $G$, the frequency does change when we condition on $tvarying$.

\textbf{Hypothesis tests:} Particular interest attaches to formally testing the joint hypothesis that the calibration for time-fixed event predictions is identical to that for time-varying events, i.e., that both \textit{lines} are not significantly different from each other. In our setup, each coefficient estimate is estimated via a separate random-effects meta-analysis. Because these are four univariate (rather than multivariate) models, standard $F$-tests for the significance of multiple coefficients within models are therefore not applicable. Instead, we apply the Bonferroni correction to test joint hypotheses.

\textbf{Incorporating uncertainty in balanced topics:} Because each weight calculated via inverse probability of treatment weighting (IPTW) depends on a probability to be time-varying, which is itself an estimate, we incorporate this extra layer of uncertainty by bootstrapping the entire analysis, as described by \citet{austin_variance_2016}. Specifically, we draw 221 bootstrap samples, resampling the propensity scores every time, and re-run the entire analysis in each replicate. The empirical distribution of the coefficient estimates shows that the additional variance from propensity-score estimation is near-zero and negligible.

\subsection{Additional details related to the IPTW}\label{Additional details related to the Inverse Probability of Treatment Weighting}

\textbf{Goal of the IPTW:} We implement IPTW to mitigate two sources of confounding. Firstly, time‐fixed events (e.g., elections) and time‐varying events (e.g., eruption occurrences) differ systematically in subject matter, which could have affected results, as discussed in the main text. Secondly, across-event dependence looms as a stealth threat to our attempts to provide a \textit{ceteris paribus} comparison. Across‐event dependence refers to the fact that multiple forecasts may hinge on the same real‐world storyline, as the outcomes are connected in some way. For example, had the Russia-Ukraine war played out differently, or come to an early end, multiple outcomes would have been affected by this. Since across-event dependence is likely the strongest when topics are similar, balancing the forecasts on those topics should affect the influence of across-event dependence. As applying the IPTW weights to our regressions does not meaningfully change the estimated gap between time‐fixed and time‐varying forecasts, we conclude that neither topic imbalance nor across‐event dependence is the primary driver of the observed effect.

\textbf{Implementation:} We implement inverse‐probability‐of‐treatment weighting on the event-level, not on the prediction-level. Topic labels (and legacy tags) were pulled from the Metaculus API.\footnote{\url{https://www.metaculus.com/api/}} Although Metaculus no longer assigns tags, we conjecture that controlling for the remaining topic labels alone produces virtually identical results.

To estimate the propensity score---the probability to be time-varying---we fit a logistic regression
\[
\P(tvarying=1 \mid \mathbf{topics}) = \frac{1}{1 + e^{- \left( \beta_0 + \bm{\beta}^\top \mathbf{topics} \right)}}.
\label{eq:logistic}
\]
We use 80 topics and tags as covariates in the logistic regression to estimate the probability of being a time-varying event. 
Fitted propensities are used to compute weights as detailed in the main text.
We report regression outcomes in Table~\ref{tab:longtable_logistic}. We only use topics and tags that come up at least with five different events. However, some tags come up only with time-fixed or time-varying events. We still keep these tags and topics in the regression. In Table~\ref{tab:longtable_logistic} the coefficients of these topics have very high or very low values---usually around positive or negative 18---and exorbitant standard errors. This phenomenon arises because, theoretically, their estimated size would be infinitely large/small, and the statistical software does not handle this extremity. Fortunately, these anomalies do not undermine our analysis in any way. 
To see this, events associated with topics with very high coefficients always are time-varying, resulting in a propensity score of one, and thus a weight of one. Conversely, for topics associated solely with time-fixed events---reflected in strongly negative coefficients---the propensity score is close to zero, which also produces a weight of one. In all cases, the integrity of our study is preserved.

% in the body of the appendix:
\begin{center}
\scriptsize
\begin{longtable}{@{\extracolsep{\fill}}lrr}
\caption{Coefficient estimates for the logistic regression in \eqref{eq:log reg}}\label{tab:longtable_logistic}\\
\toprule
Covariate & Estimate & Std.\ Error \\
\midrule
\endfirsthead

\multicolumn{3}{l}{\textit{Table \thetable\ continued from previous page}} \\
\toprule
Covariate & Estimate & Std.\ Error \\
\midrule
\endhead

\midrule
\multicolumn{3}{r}{\textit{Continued on next page}} \\
\endfoot

\bottomrule
\endlastfoot

(Intercept)                                   &$  0.96519$   & $0.02715$ \\
Technology                                    &$  0.79915$   & $0.04080$ \\
Social Sciences                               &$  0.44501$   & $0.07431$ \\
Economy \& Business                           &$  0.17403$   & $0.03219$ \\
S\&P 500 Index                                &$ 17.42685$   & $427.31226$ \\
Natural Sciences                              &$  1.14779$   & $0.05605$ \\
California                                    &$  0.66274$   & $0.11860$ \\
Bioethics                                     &$  0.15485$   & $0.10683$ \\
Computing and Math                            &$ -1.47888$   & $0.04805$ \\
Politics                                      &$ -0.93215$   & $0.02860$ \\
Russia                                        &$  1.84676$   & $0.03667$ \\
Geopolitics                                   &$  0.19724$   & $0.03468$ \\
Health \& Pandemics                           &$  0.13689$   & $0.03922$ \\
Epidemiology                                  &$  1.35235$   & $0.04919$ \\
SpaceX                                        &$ -0.90372$   & $0.05607$ \\
Cryptocurrencies                              &$  1.02433$   & $0.07922$ \\
Medicine                                      &$  0.41705$   & $0.06986$ \\
China                                         &$ -0.12499$   & $0.11287$ \\
Ukraine                                       &$  0.08893$   & $0.03573$ \\
US Senate                                     &$  0.17288$   & $0.02780$ \\
Elections                                     &$ -2.56205$   & $0.03470$ \\
Artificial Intelligence                       &$ -1.00066$   & $0.05230$ \\
Joe Biden                                     &$  2.88609$   & $0.07449$ \\
Virology                                      &$  1.32127$   & $0.04928$ \\
Elon Musk                                     &$ -1.76658$   & $0.05229$ \\
Donald Trump                                  &$  0.66523$   & $0.02243$ \\
President of the US                           &$  0.17121$   & $0.06244$ \\
DeepMind                                      &$  0.34760$   & $0.06051$ \\
Google Scholar                                &$ 16.45309$   & $365.19765$ \\
Sports \& Entertainment                       &$ -1.30713$   & $0.03766$ \\
LIGO                                          &$ 16.51459$   & $453.73149$ \\
United States                                 &$  0.81035$   & $0.04910$ \\
North Korea                                   &$  0.12224$   & $0.06217$ \\
Planetary science                             &$ 16.45309$   & $273.92508$ \\
Space                                         &$ -0.52675$   & $0.10393$ \\
Tesla Model 3                                 &$ -1.11258$   & $0.10596$ \\
Bureau of Labor Statistics                    &$ -0.16812$   & $0.10023$ \\
Law                                           &$ -0.09966$   & $0.05657$ \\
Aerospace engineering                         &$ -1.73657$   & $0.08066$ \\
Environment \& Climate                        &$ -1.34214$   & $0.07626$ \\
Microsoft                                     &$ 18.03582$   & $127.95781$ \\
Benjamin Netanyahu                            &$ -0.46429$   & $0.06081$ \\
Boris Johnson                                 &$ -0.93256$   & $0.04289$ \\
Twitter                                       &$ 17.29658$   & $170.50745$ \\
Chatbot                                       &$  0.84639$   & $0.09484$ \\
Silicon Valley Bank                           &$ 17.42685$   & $193.43858$ \\
SEC                                           &$  1.90004$   & $0.12091$ \\
OpenAI                                        &$  1.56564$   & $0.09720$ \\
Hunter Biden                                  &$  0.50166$   & $0.05245$ \\
New York (state)                              &$ -0.91728$   & $0.05936$ \\
PredictIt                                     &$ -1.41526$   & $0.07925$ \\
Tesla                                         &$  1.15337$   & $0.16105$ \\
Washington, D.C.                              &$  1.28104$   & $0.13432$ \\
Republican Party (US)                         &$ -0.87933$   & $0.08391$ \\
Australia women's national soccer team        &$ -1.30069$   & $0.08839$ \\
Climate change denial                         &$ -1.64016$   & $0.11217$ \\
Democratic Left Alliance                      &$-18.59911$   & $310.25028$ \\
FIDE world rankings                           &$-18.22413$   & $125.78473$ \\
Saudi Professional League                     &$-18.22413$   & $468.29816$ \\
Yesh Atid                                     &$-18.59911$   & $214.34720$ \\
Kukiz'15                                      &$-16.96921$   & $340.01604$ \\
Pieter Omtzigt                                &$-18.59911$   & $423.69091$ \\
European United Left--Nordic Green Left       &$-18.59911$   & $490.27154$ \\
Roy Jenkins                                   &$-18.59911$   & $410.07478$ \\
S.C. Braga                                    &$-18.22413$   & $197.11345$ \\
Phil Murphy                                   &$-19.53126$   & $273.92507$ \\
Donald Trump Jr.                              &$-16.96921$   & $62.35268$ \\
Ilya Sutskever                                &$-20.33041$   & $239.61512$ \\
Ecuador                                       &$-18.44799$   & $271.91233$ \\
Richárd Rapport                               &$-18.22413$   & $230.89891$ \\
Roman Baber                                   &$-18.59911$   & $387.73062$ \\
Param Singh                                   &$-16.96921$   & $613.59823$ \\
Meral Akşener                                 &$-18.59911$   & $104.23222$ \\
Liz Truss                                     &$-18.59911$   & $119.14613$ \\
Nate Silver                                   &$-17.26262$   & $253.32567$ \\
African National Congress                     &$-18.59911$   & $1304.52772$ \\
Jake Sherman                                  &$-18.59911$   & $1331.42803$ \\
Brian Fitzpatrick                             &$-18.59911$   & $112.52621$ \\
Patrick McHenry                               &$-18.59911$   & $192.25855$ \\
List of MLB progressive HR leaders            &$-18.22413$   & $418.42746$ \\
\end{longtable}
\end{center}

Since the probability to be time-varying is calculated from the event dataset, this means that we do not balance topics \textit{within} individual forecasters, but across forecasters. However, we do check for balance \textit{within} forecasters to make sure that IPTW works to balance differences between treatment and control. Some forecasters may still exhibit unbalanced panels of predictions, but the average forecasters, and average effect (that we are interested in) will be unbiased.

\begin{figure}[htbp]
  \centering
  \includegraphics[width=\textwidth]{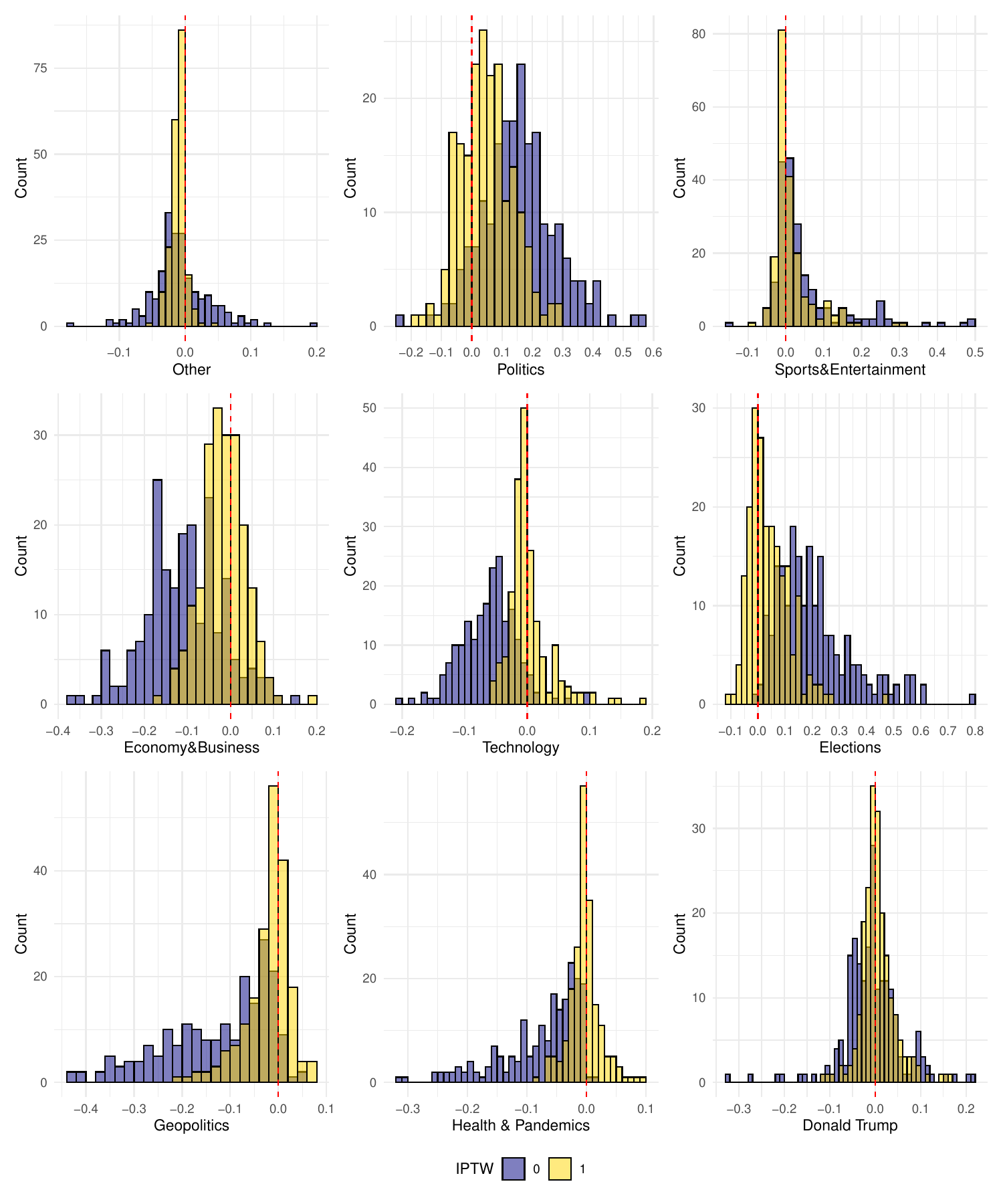}
  \caption{Differences in topic share within forecasters’ predictions on time-varying vs.\ time-fixed events.}
  \label{fig:Balance_1}
\end{figure}

\begin{figure}[htbp]
  \centering
  \includegraphics[width=\textwidth]{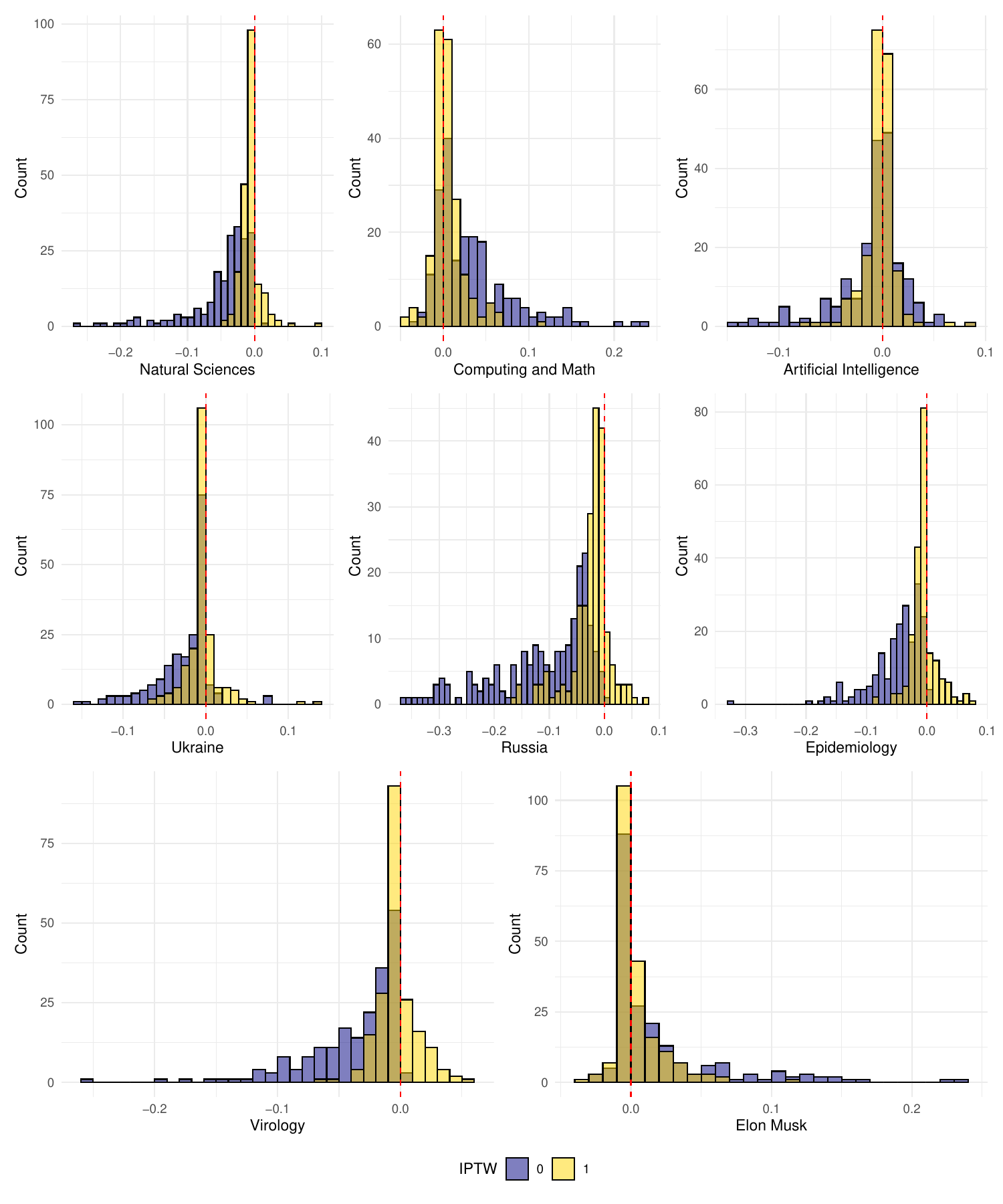}
  \caption{Differences in topic share within forecasters’ predictions on time-varying vs.\ time-fixed events.}
  \label{fig:Balance_2}
\end{figure}

%%%% Differences in predictions within forecasters
Since we are interested whether predictions on time-varying and time-fixed events are systematically different (and we introduce a within-subject-design in Section~\ref{M-sec: Regression analysis}) we additionally examine the distribution of topics differences in predictions \textit{within} individual subjects. For each individual forecaster in our sample we calculate the share of time-varying and time-fixed event predictions for each topic. We then calculate the difference between the share of time-varying and time-fixed event predictions on each topic. The distributions of differences across forecasters are presented in Figures~\ref{fig:Balance_1} and \ref{fig:Balance_2}. The navy-blue histogram refers to the unweighted distribution, i.e., the ``original'' distribution of the data. The $x$-axis refers to the difference, i.e., share of time-fixed event predictions minus share of time-varying event predictions:
\begin{equation}\label{eq:log reg}
    \frac{n_{tfixed} - n_{tvarying}}{n_{tfixed} + n_{tvarying}}.
\end{equation} 
Thus, if a topic is represented more often in time-fixed event predictions within forecasters, the histogram will be skewed right. The $y$-axis refers to the count of forecasters within the specific bin-width of the histogram. The graphs in Figures~\ref{fig:Balance_1} and \ref{fig:Balance_2} show that time-fixed and time-varying event predictions are not balanced. Rather, as the aggregate picture in Table~\ref{M-tab:predictions|topics} suggests, forecasters make more predictions on time-fixed events when they forecast topics such as Politics, Elections, Sports \& Entertainment, Computing \& Math, whereas predictions on Geopolitics, Economy \& Business, Health \& Pandemics, Technology and Natural Sciences tend to be predominantly time-varying events.

The golden histogram, which overlays the navy blue one in Figures~\ref{fig:Balance_1} and \ref{fig:Balance_2}, shows the distribution after applying IPTW. Since the golden histograms are much closer to---and more tightly centered around---zero  on the horizontal axis (where forecasters have made as many time-varying predictions as time-fixed predictions), we see clearly that IPTW works to equilibrate differences between predictions on the same topic within forecasters. Individual forecasters may still show imbalance, but the overall average treatment effect should be much less biased.  Applying these weights to our calibration regressions does not materially change the estimated gap between time‐fixed and time‐varying forecasts (Table~\ref{M-tab:regrssn_outcomes}). We therefore conclude that neither topic imbalance nor across‐event dependence is the primary driver of the observed miscalibration.

\section{Robustness checks}\label{sec:Robustness checks}

\subsection{Additional sample}

%%%Context:
We classified 197 events with 28,667 corresponding forecasts as time-varying events that can ``only turn out to not happen early on'', i.e., impose the opposite incentive as events which can only happen early. These include events such as ``Will Germany win the FIFA World Cup in 2018?'', which turned out not to happen before it could have turned out to happen (at the final). We plot the calibration on these events in green in Figure~\ref{fig:calibration_dynamic_-}. We see in Figure~\ref{fig:calibration_dynamic_-} that the forecasts on these events are on average lower than the forecasts on time-fixed events (observed frequency of events is higher conditional on the same prediction), but it is not as clear cut as with time-varying events that can only happen early. This coincides with our theoretical contemplations as the forecasters have an incentive to lower reported predictions to achieve a higher accuracy in case the event does not occur (which happens earlier).

\begin{figure}
    \centering
    \includegraphics[width=0.8\textwidth]{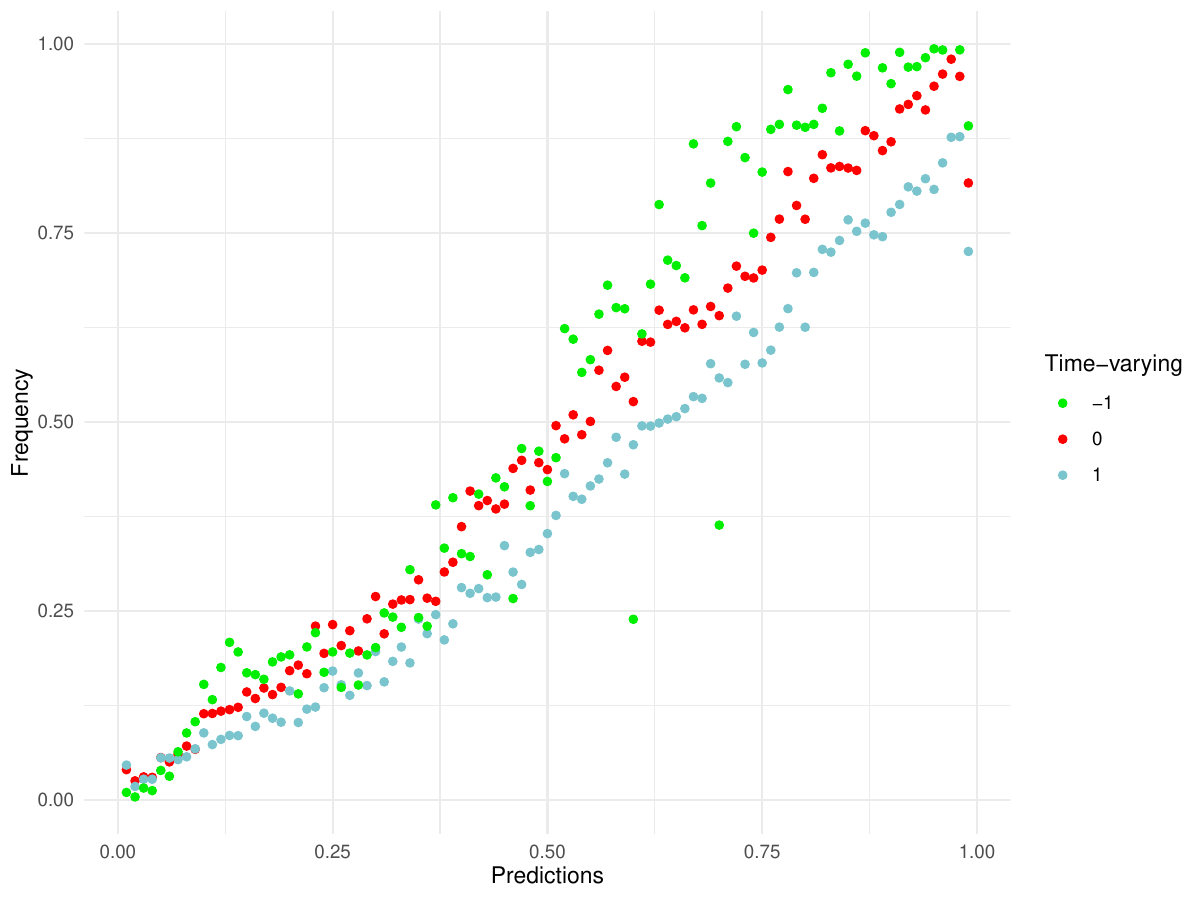}
    \caption{Calibration including time-varying events with opposing incentives}
    \label{fig:calibration_dynamic_-}
\end{figure}

Most events focus on competitions (Super Bowl, Formula 1, etc.). Given the small sample size and topical one-sidedness of these events, we cannot apply the full analysis to this sample. However, we can apply our standard OLS regression on the full sample when we codify the events that can only turn out not to happen early as $tvarying=-1$. The regression equation remains unchanged. The results are reported in Table~\ref{tab:appendix_regrssn_outcomes}, column 2. We find that the inclusion of this sample does not affect the results in any meaningful way.

\subsection{Within-event dependence}\label{Within-event dependence}

Metaculus allows each forecaster to update a probability forecast multiple times for the same event. Although we treat each row as an independent observation, this within-event dependence could---in principle---distort coefficient estimates. 
A potential concern is that multiple forecasts linked to the same outcome are counted separately, such that their outcome is not truly independent (as assumed).

To verify that our main findings do not rest on this data structure, we re-ran the core analysis on two alternative, ``collapsed'' versions of the data. In every case, the regression analysis was the same; only the input dataset differed. 

\begin{enumerate}
    \item First-Forecast only
    \begin{itemize}
        \item For each forecaster--event pair, keep only the very first submitted probability.
        \item Reduces $N$ from 215 581 to 56 081 forecasts.
        \item This scenario increases the average ``time to event,'' so \textit{a priori} we expect any time-preference effect to be larger in magnitude.
    \end{itemize}
        \item Last-Forecast only
    \begin{itemize}
        \item For each forecaster--event pair, keep only the last forecast (can also be the first and last and, hence, the only forecast).
        \item Reduces $N$ from 215 581 to 56 081 forecasts.
        \item This scenario reduces the ``time to event,'' so \textit{a priori} we expect any time-preference effect to be lower in magnitude.
    \end{itemize}
\end{enumerate}

We are able to select the first or last forecast because we have their timestamps. Table \ref{tab:appendix_regrssn_outcomes} reports estimates of the regression coefficients in 
\begin{equation}\label{eq:regression}
 \mu_{i} = \underbrace{\beta_0+\beta_1 \ G}_{\text{time-fixed events}}  + \underbrace{\beta_2 \ tvarying + \beta_3 \ tvarying \times G}_{\text{time-varying events}}  + \varepsilon_{i},\qquad\E[\varepsilon_{i}\mid G,\ tvarying]=0,
\end{equation}
where $\beta_0$ denotes the intercept, and $\beta_1$ the slope for the time-fixed event calibration, and $\beta_2$ and $\beta_3$ measure the change in intercept and slope for time-varying events.

Most importantly, the results are not meaningfully different from the results reported in the main text. Standard errors are of the same order of magnitude as in the full dataset. Quantitative changes are relatively minor, confirming that within-event dependence is not a primary driver of the observed effect. The only major change in significance is with $\beta_2$, which is significantly larger than 0 when the last forecasts are used only. We tentatively assume that the cause of this is a stronger center-bias of time-fixed predictions in this dataset than in the others. This is evidenced by a lower $\beta_0$ and higher $\beta_1$. If the center-bias for time-varying predictions is lower than for time-fixed predictions, then this would explain the change in $\beta_2$. We do not think that this is a cause for concern, as the overall size of the coefficient is still small and time-varying forecasts are systematically too high in this model.

These robustness checks demonstrate that treating each update as a separate observation does not bias our core findings. Whether we use the first forecast or the last forecast, the qualitative results remain unchanged. The principal time preference effect we document persists, and precision of measurement remains high even after collapsing the data.

\begin{table}[t!]
    \centering
    \begin{threeparttable}
    \caption{Regression Estimates}\label{tab:appendix_regrssn_outcomes}
    \begin{tabular}{lcccc}
        \toprule
        & \textbf{(1)} &        \textbf{(2)}                    & \textbf{(3)} & \textbf{(4)} \\
        & \textbf{Original} & \textbf{Additional sample} & \textbf{Collapsed (first)} & \textbf{Collapsed (last)}  \\
        \midrule
         $\beta_0$     & $-0.0507$*** & $-0.047$*** & $-0.05$*** & $-0.0579$*** \\
                       & (0.0055) & (0.007) & (0.0049)  & (0.0049) 
                       \vspace{10pt}\\
            $\beta_1$  & $1.0665$*** & $1.009$ & $1.0824$*** & $1.0987$*** \\
                       & (0.0108) & (0.0106) & (0.011) & (0.0106)
                       \vspace{10pt}\\
            $\beta_2$  & $0.0095$ & $-0.004$ & $0.0084$  & $0.0243$*** \\
                       & (0.0049) & (0.0048) & (0.007) & (0.0064) 
                       \vspace{10pt}\\
            $\beta_3$  & $-0.129$*** & $-0.113$*** & $-0.1238$***& $-0.1145$*** \\
                       & (0.0124) & (0.015) & (0.0128)  & (0.011) \\
        \bottomrule
    \end{tabular}
    \begin{tablenotes}
      \small
      \item \textit{Note.} * $p \leq 0.05$; ** $p < 0.01$; *** $p < 0.001$. 
      \item The first column is simply a copy of the results of the main analysis (IPTW \& Within-subject) from the main text. The second column describes the results from the OLS analysis, including the additional sample. The third and fourth column present the results of the robustness check regarding the within-event dependence, where we collapsed updated predictions. The hypothesis test for $\beta_1$ is against 1; for all other coefficients against 0.  
    \end{tablenotes}
    \end{threeparttable}
\end{table}

\section{Calculating the effect of time preferences on accuracy}\label{sec:Effect_on_accuracy}

In order to calculate the effect of time preferences on accuracy, we use the decomposition of strictly proper scoring rules into calibration and precision / refinement \citep{parmigiani_decision_2009}.\footnote{Other names may be used to describe the same terms. For more details on the decomposition of strictly proper scoring rules, we refer to \citet{brocker_reliability_2009}.} That is, the forecasting error that systematic bias adds to the total error is independent of size of the total error. Therefore, we can simply look at the calibration of time-fixed events versus time-varying events and calculate the error contribution of each, without ever calculating full forecasting errors. Specifically, the decomposed squared error is
\begin{align*}
\QSR
  &= \underbrace{\sum_{G \in \Gamma} \frac{n_G}{N}\,(G - \mu_G)^2}_{\text{Calibration}}
   \;+\;
     \underbrace{\sum_{G \in \Gamma} \frac{n_G}{N}\,\big[G(1-G)\big],}_{\text{Precision (Refinement)}}
\end{align*}
where $\Gamma$ denotes the binned prediction levels $\{0.01,0.02,...,0.99\}$, $G$ refers to the predictions, and $\mu_G$ equals the observed frequency of events $\P(X=1 \mid G)$. For example, we could look at all predictions $G=0.25$ and observe an actual frequency of $\mu_G=\P(X=1\mid G=0.25)= 0.17$ in our data. Since $G=0.25$ refers to a small share of predictions $G$, we denote this share as $\frac{n_G}{N}$, where $N$ is the total number of predictions.

We can calculate the calibration for predictions on time-varying and time-fixed events respectively. We utilize all predictions, that is, the comparison is unweighted (no IPTW) and across all forecasters. We basically take the sum of squared residuals between the data points and the 45 degree diagonal in Figure~\ref{M-fig:calibration}. We find that the difference in squared error between time-varying and time-fixed predictions is $\Delta S = 0.008685$. Without knowing the ``total error'' in the dataset (we have not calculated precision), we can safely assume that forecasters are better than random, which would correspond to an error of $S=0.25$. Thus, the error reduction is \textit{at least} $\frac{0.008685}{0.25}\approx3.5\%$.

\section{Forestplots for the Random-Effects Meta Analysis}\label{sec:Forestplots for the Random-Effects Meta Analysis}

\begin{figure}[b!]
    \centering
    \includegraphics[width=.80\textwidth]{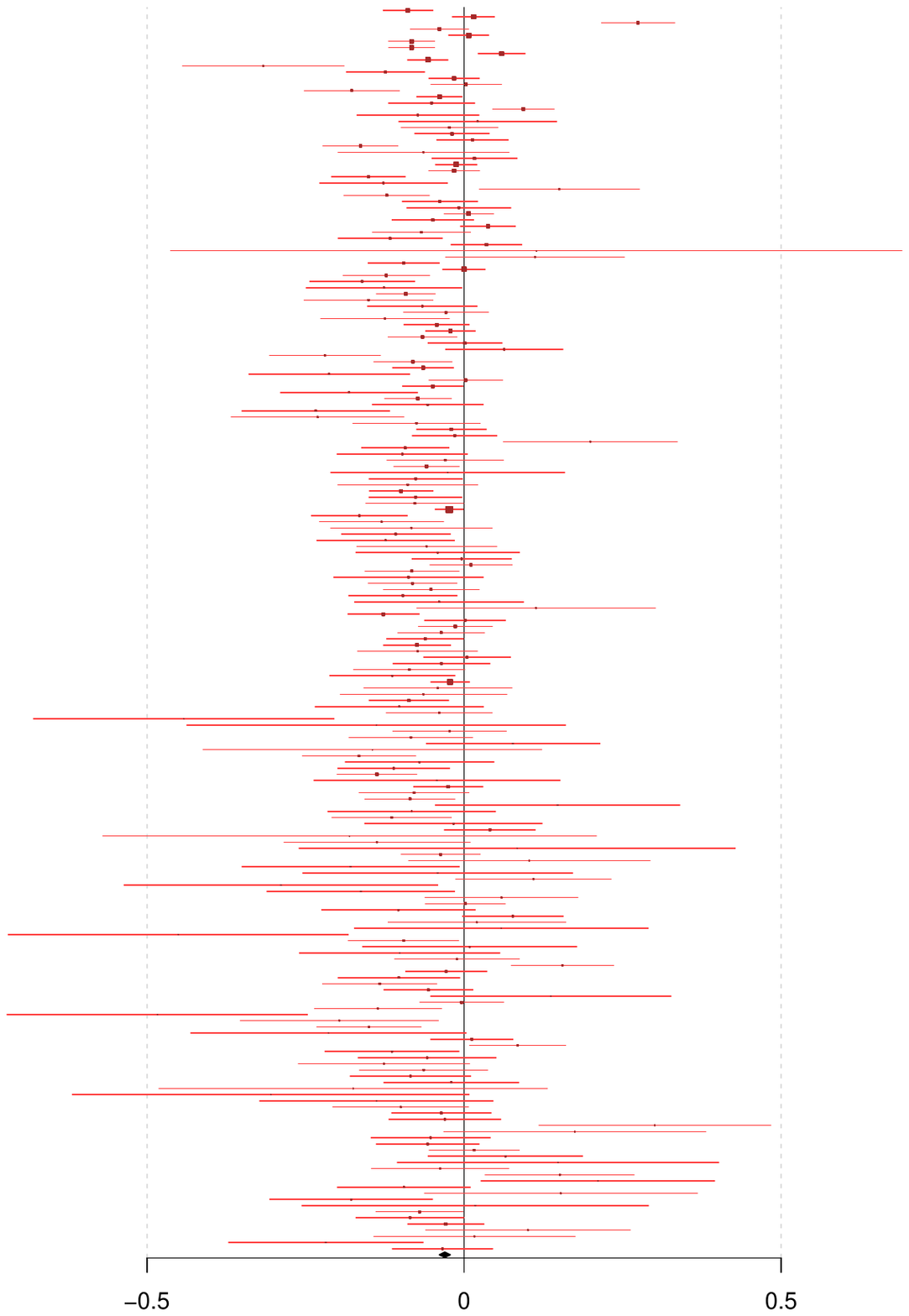}
    \caption{Estimates of the intercept $\beta_0$}
    \vspace{1ex}
            \parbox{0.95\textwidth}{%
      \small
      \textit{Note.} This figure presents the forestplot of individual within-forecaster estimates of the coefficient $\beta_0$, i.e. the intercept. The forecasters are sorted (top to bottom) from most predictions made (total) to least predictions made. The horizontal axis refers to the coefficient's size. Whiskers indicate the 95\%-confidence intervals.
    }\label{FigForest1}
\end{figure}

\begin{figure}
    \centering
    \includegraphics[width=0.85\textwidth]{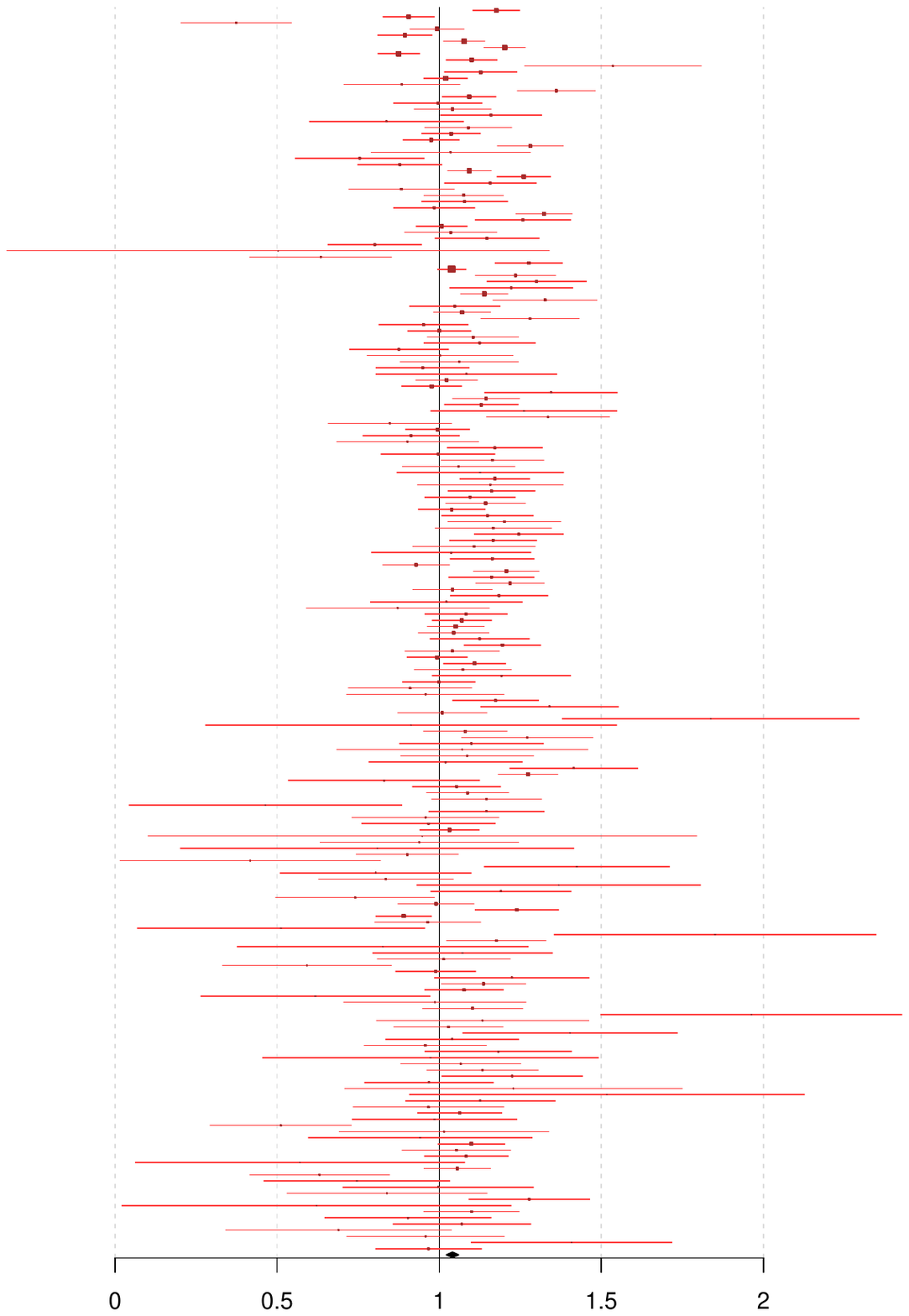}
    \caption{Estimates of the slope $\beta_1$}
    \vspace{1ex}
    \parbox{0.95\textwidth}{%
      \small
      \textit{Note.} This figure presents the forestplot of individual within-forecaster estimates of the coefficient $\beta_1$, i.e. slope, or ``how predictions correspond to frequencies'' for time-fixed events. The forecasters are sorted (top to bottom) from most predictions made (total) to least predictions made. The horizontal axis refers to the coefficient's size. Whiskers indicate the 95\%-confidence intervals.
    }\label{FigForest2}
\end{figure}

\begin{figure}
    \centering
    \includegraphics[width=0.85\textwidth]{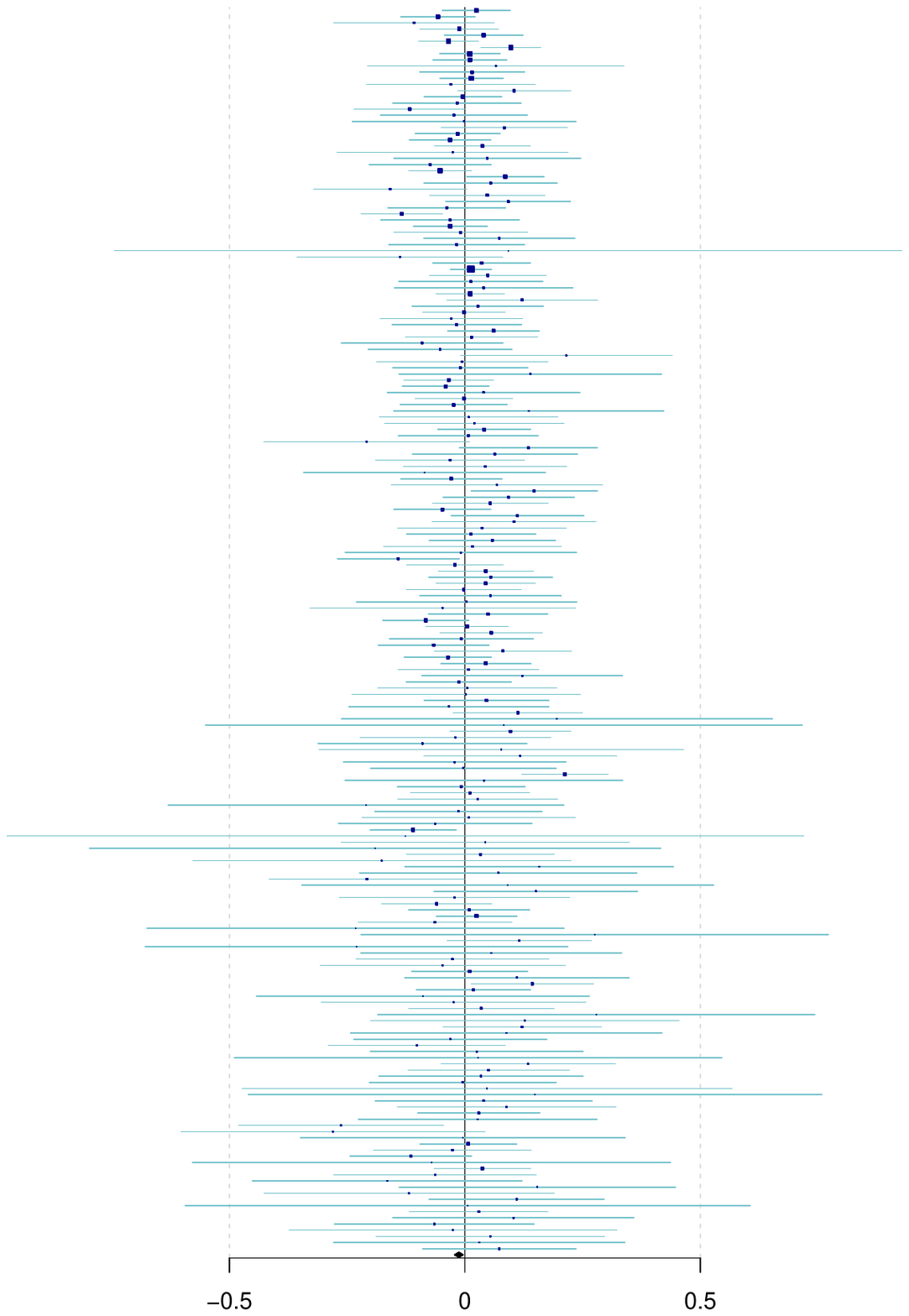}
    \caption{Estimates of the intercept difference $\beta_2$}
        \vspace{1ex}
    \parbox{0.95\textwidth}{%
      \small
      \textit{Note.} This figure presents the forestplot of individual within-forecaster estimates of the coefficient $\beta_2$. The forecasters are sorted (top to bottom) from most predictions made (total) to least predictions made. The horizontal axis refers to the coefficient's size. Whiskers indicate the 95\%-confidence intervals.
    }\label{FigForest3}
\end{figure}

\begin{figure}
    \centering
    \includegraphics[width=0.85\textwidth]{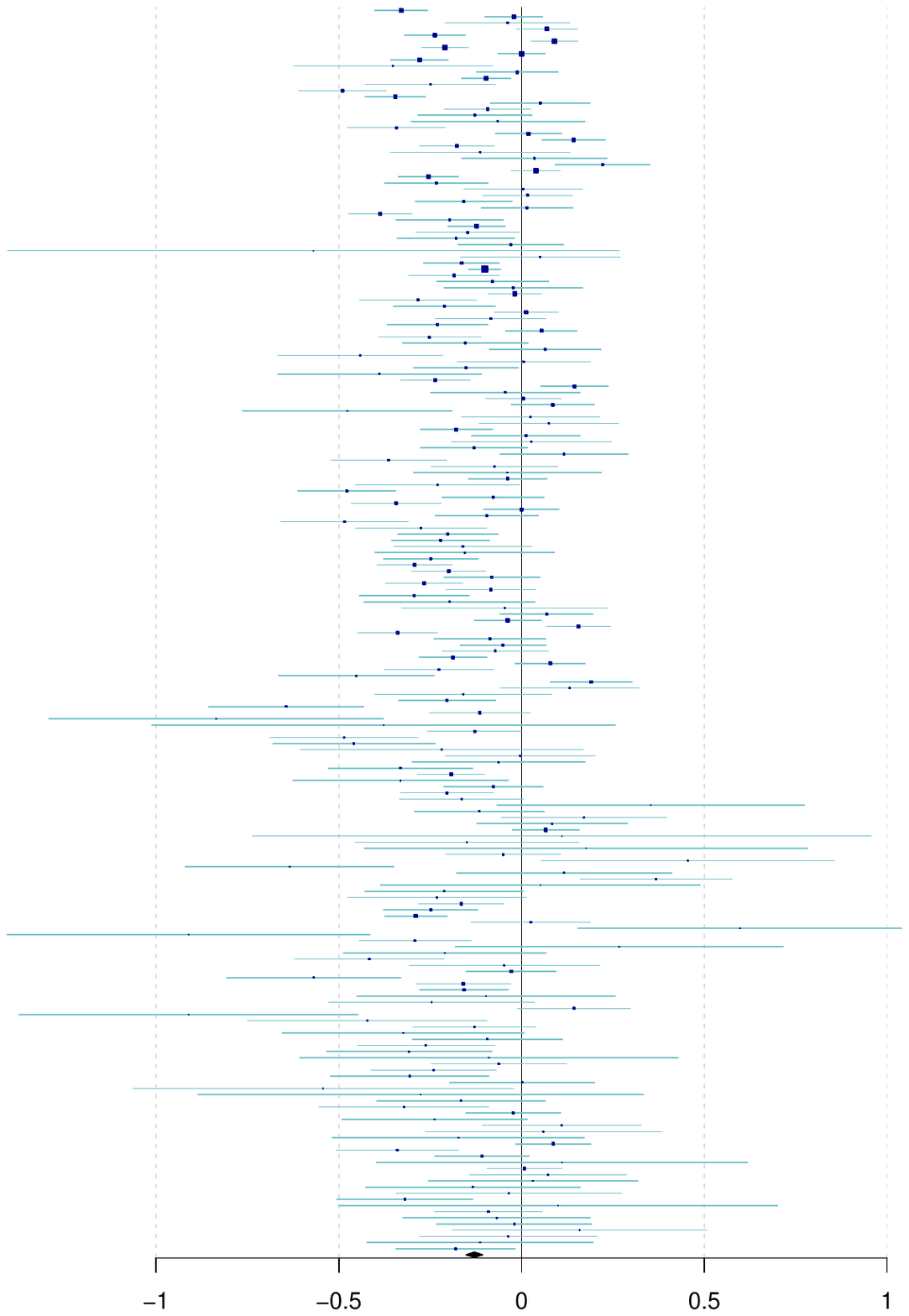}
    \caption{Estimates of the slope difference $\beta_3$}
            \vspace{1ex}
    \parbox{0.95\textwidth}{%
      \small
      \textit{Note.} This figure presents the forestplot of individual within-forecaster estimates of the coefficient $\beta_3$, i.e., the change in slope for time-varying event predictions. The forecasters are sorted (top to bottom) from most predictions made (total) to least predictions made. The horizontal axis refers to the coefficient's size. Whiskers indicate the 95\%-confidence intervals.
    }\label{FigForest4}
\end{figure}

\clearpage

\singlespacing

\bibliographystyle{apalike}
\bibliography{thebib}